\renewcommand\section{\@startsection{section}{1}{\z@}{-3.25ex plus -1ex minus -.2ex}{1.5ex plus .2ex}{\normalsize\bf}}
\renewcommand\subsection{\@startsection{subsection}{2}{\z@}{-3.25ex plus -1ex minus -.2ex}{1.5ex plus .2ex}{\normalsize\bf}}
\renewcommand\subsubsection{\@startsection{subsubsection}{3}{\z@}{-3.25ex plus -1ex minus -.2ex}{1.5ex plus .2ex}{\normalsize\bf}}
\providecommand{\customgenericname}{}
\newcommand{\newcustomtheorem}[2]{%
  \newenvironment{#1}[1]
  {%
   \renewcommand\customgenericname{#2}%
   \renewcommand\theinnercustomgeneric{##1}%
   \innercustomgeneric
  }
  {\endinnercustomgeneric}
}
\newtheorem{thm}{Theorem}
\newtheorem{cor}[thm]{Corollary}
\newtheorem{rem}[thm]{Remark}
\newcommand{\Reals}{\mathbb {R}}
\begin{document}
\begin{frontmatter}
\title{The Local Validity of Special Relativity, \\ Part 1: Geometry}
\author{Samuel C. Fletcher}\ead{scfletch@umn.edu}
\address{Department of Philosophy\\ University of Minnesota, Twin Cities}
\author{James Owen Weatherall}\ead{weatherj@uci.edu}
\address{Department of Logic and Philosophy of Science\\ University of California, Irvine}
\begin{abstract}
In this two-part essay, we distinguish several senses in which general relativity has been regarded as ``locally special relativistic''.
Here, in Part 1, we focus on senses in which a relativistic spacetime has been said to be ``locally (approximately) Minkowskian''.
After critiquing several proposals in the literature, we present a result capturing a substantive sense in which every relativistic spacetime is locally approximately Minkowskian.
We then show that Minkowski spacetime is not distinguished in this result: every relativistic spacetime is locally approximately every other spacetime in the same sense.
In Part 2, we will consider ``locally specially relativistic'' matter theories.
\end{abstract}
\end{frontmatter}
                                         % Activate to display a given date or no date

\setlength{\epigraphwidth}{\textwidth}
\epigraph{The general theory of relativity rests entirely on the premise that each infinitesimal line element of the spacetime manifold physically behaves like the four-dimensional manifold of the special theory of relativity.
Thus, there are infinitesimal coordinate systems (inertial systems) with the help of which the $ds$ are to be defined exactly like in the special theory of relativity.
The general theory of relativity stands or falls with this interpretation of $ds$.
It depends on the latter just as much as Gauss’ infinitesimal geometry of surfaces depends on the premise that an infinitesimal surface element behaves metrically like a flat surface element \ldots}
{Albert Einstein to Paul Painlev\'e, December 7, 1921\\
As translated in \citet{LehmkuhlEP}\\
\citep[Doc. 314]{einstein2009collected}}

\doublespacing
\section{Introduction}

\noindent The literature on the foundations of general relativity is replete with claims that, locally, general relativity is like special relativity.
Such claims can take different forms.
Sometimes it is said that, according to general relativity, spacetime is ``locally (approximately) flat" or ``locally Minkowskian," where Minkowski spacetime is the flat, gravitation-free setting of special relativity.\footnote{
    In what follows, a \emph{relativistic spacetime} is a pair $(M,g_{ab})$, where $M$ is a smooth, four-dimensional manifold that we assume to be connected, Hausdorff, and paracompact; and $g_{ab}$ is a smooth, Lorentz-signature metric on $M$. Relativistic spacetimes are the models, or ``solutions'', of general relativity; they represent possible universes, according to the theory. For more on the conventions we adopt here, including the abstract index notation, see \citet{Wald} or \citet{malament2012}.  (Observe, though, that these texts differ in the sign of the metric signature; that choice will not matter for our purposes.)  In this context, Minkowski spacetime is a relativistic spacetime where $M$ is diffeomorphic to $\mathbb{R}^4$ and the metric $g_{ab}$ is flat and geodesically complete.}
In other cases---not necessarily independent of the former ones---the key idea is that matter in general relativity behaves locally ``as if'' it is in the flat-spacetime setting of special relativity.

This locally flat, or locally special relativistic, character of general relativity has been taken to have great  significance. For some authors, it is a crucial heuristic, motivating why one might adopt or postulate the structure and laws of general relativity as a theory of gravitation in the first place \citep{schild1967lectures,ehlers1973survey}.
In this respect it functions similarly to ``correspondence principles'' in the formulation of the old quantum theory \citep{sep-bohr-correspondence}.
For others, claims about local flatness are presented as deductive consequences of general relativity that establish the conditions under which certain general relativistic descriptions of phenomena may be locally well-approximated by special relativistic descriptions \citep{Reichenbach1958,Born1962,ehlers1973survey,torretti1996relativity}; understood in this way, local flatness, if and when it obtains, may provide a sense in which general relativity reduces to, or explains the successful application of, special relativity \citep{nickles1973two,butterfield2011emergence}.
And for still others, the locally special relativistic character of the theory is invoked to support a privileged role for special relativity in interpreting general relativity \citep{ehlers1973survey,MTW1973,Brown1997,Brown2005,Knox2013-KNOESG}.
For many commentators, local flatness is intimately connected with other principles that they take to be foundational to or important in general relativity, such as some version of the equivalence principle  \citep{schild1967lectures,ehlers1973survey,Knox2013-KNOESG,Brown2005,ReadBrownLehmkuhl2018}.

But despite the ubiquity of these claims, there is little clarity or agreement within the literature concerning what, precisely, such assertions are supposed to mean.
Our goal here and in the sequel to this paper is to offer a new perspective on the sense, or senses, in which general relativity is locally like special relativity, in the service both of clarifying the sense in which it is true that spacetime is (approximately) locally flat and in assessing what significance that has for the local dynamics of matter.\footnote{
    Given that the project here is to make sense of claims about local flatness in general relativity, and given that local flatness is implicated in some formulations of the equivalence principle, one might take the present project to be part of a long tradition of work attempting to precisely recover what various authors have meant by the equivalence principle or some other alleged principle, such as ``substantive general covariance'' \citep[for which see, e.g.][]{NortonEP,NortonGC,PooleyGC,LehmkuhlEP}.
    But we see our project differently.
    In particular, we do not seek to trace the historical development of claims about local flatness, nor to adjudicate historical debates from a contemporary perspective.
    Instead, we seek to isolate a sense in which spacetime is locally approximately flat in general relativity and to discuss its significance.
    Our critical remarks on other proposed explications of the claim are offered to clear the ground; we take the arguments given here to be of interest irrespective of whether the claims we defend align with those of others in the literature.
    We also see this project as a continuation of the research programs sketched in, for instance, \citet{WeatherallDogmas}, to isolate precise mathematical statements that might serve as sufficient conditions for theorems concerning when matter theories are adapted to a certain geometry, or, relatedly, the program begun in \citet{FletcherInvitation,Fletcher2020} to better explicate the relationship between general and special relativity.}
The present paper focuses on geometrical aspects of the question, with an emphasis on local approximate flatness; the sequel, which will make use of the results here, will consider several senses in which matter dynamics may be locally special relativistic.

We will begin, in section \ref{sec:wrong}, by presenting several possible interpretations---or perhaps better, explications---of the assertion ``spacetime is locally (approximately) flat'', all inspired by attempts in the literature to state the claim precisely.  %Briefly, these interpretations are:
%\begin{description}
 %   \item[Literal Interpretation:] Every relativistic spacetime is locally isometric to a flat spacetime (e.g., a region of Minkowski spacetime).
%    \item[Coordinate Chart Interpretation, first pass:] In any sufficiently small region of any relativistic spacetime, coordinates may be chosen relative to which curvature vanishes.
%    \item[Tangent Space Interpretation:] The tangent space at a point of any relativistic spacetime is, or is somehow equivalent to, Minkowski spacetime.
%    \item[Coordinate Chart Interpretation, second pass:] At any point of any relativistic spacetime (or along certain curves), local coordinates may be chosen so that, at that point (or along that curve) (a) the components of the metric agree with the Minkowski metric in standard coordinates and (b) all Christoffel symbols vanish.
%\end{description}
As we will argue, each of these is inadequate---either because it is false, misleading, or does not perspicuously capture the relevant facts.
Still, we claim there is a precise sense in which every relativistic spacetime \emph{is} locally approximately flat.
In section \ref{sec:right}, we will introduce that sense, which is common folklore in mathematical relativity but rarely, to our knowledge, stated precisely or proved, at least in its full generality.\footnote{
    There are partial exceptions.  For instance, \citet[pp.~20, 44]{ehlers1973survey} calls versions of this Theorem a ``well known theorem of differential geometry'', while \citet{poisson2004relativist} and \citet{poisson2011motion} call a somewhat weaker result the ``local flatness theorem'' and prove it. But neither treatment is as general as it could be, in ways that may obscure its significance.  Even so, the result is not original: it is a trivial consequence of work by \citet{oRaifeartaigh1957fermi}, and it is invoked by others such as \citet{Geroch+Jang} and \citet{Geroch+Weatherall}.}
This will be expressed by Theorem \ref{thm:normal}.
We will discuss some advantages of this approach, which include that it clarifies the sense in which ``local flatness structure'' fails to be unique.

In section \ref{sec:triviality}, we will argue that once it is clear what local approximate flatness amounts to, there are reasons to be cautious about attributing too much significance to it.
In particular, we will argue, there is nothing special about \emph{flatness} in Theorem \ref{thm:normal}.
In fact, \emph{every} relativistic spacetime locally approximates every other relativistic spacetime in the same sense that Minkowski spacetime does.
In other words, while it is true that spacetime is locally approximately Minkowskian, so, too, is it locally approximately (anti-)de Sitterian, Schwarzschildian, Kerrist, G\"odelian, and so on.
This will be our Theorem \ref{thm:universal}.
The upshot is that local approximate flatness, as a feature of pseudo-Riemannian manifolds, might be better characterized as a \textit{universal approximation property}: to first order (in derivatives), all metrics of a given signature locally approximate one another.
It is only at \emph{second} order---that is, the order of curvature---that these metrics fail to approximate one another, even at a point.
This fact is arguably deep, and closely connected to the fact that curvature can be represented as a tensor.
But in our view, it is not naturally expressed as the claim that ``spacetime is (approximately) locally flat''---even though it happens to imply that spacetime \emph{is} locally approximately flat (among other things).

In section \ref{sec:symmetry}, we will show how Theorems \ref{thm:normal} and \ref{thm:universal}  offer additional insight into another claim closely related to the claim that spacetime is locally approximately Minkowskian, which is that relativistic spacetimes admit local approximate Poincar\'e symmetries \citep{ReadBrownLehmkuhl2018, Fletcher2020}.
Finally, we will offer some brief concluding remarks in section \ref{sec:conclusion}.
In Part II, we will turn to the relationship between local approximate flatness and the behavior of matter.

\section{What local approximate flatness is not}
\label{sec:wrong}

We begin by clearing the air.  What should one \emph{not} mean by the assertion ``spacetime is locally flat'' in general relativity?\footnote{An anonymous reviewer questions our use of the word ``interpretation'' throughout this section.  Here is how we see what we are doing.  The claim ``spacetime is locally (approximately) flat'' appears to be ambiguous, in the sense that different authors interpret it differently.  Here we identify several such interpretations and offer precise statements (``explications'') intended to capture the meaning of the claim under each interpretation.

   Note, too: Occasionally the term ``locally flat'' is used in geometric topology \citep[e.g.,][]{brown1962locally} to denote a particularly ``nice'' or ``neat'' embedding of one topological manifold into another.
    Clearly that usage is not applicable to the case at hand.}

As a first pass, recall that a relativistic spacetime $(M,g_{ab})$ is \textit{flat} just when its Riemann tensor, $R^a{}_{bcd}$ vanishes everywhere.
Recall further that $(M,g_{ab})$ is \textit{locally isometric to} a spacetime $(M',g'_{ab})$ when, for each point $p \in M$, there exists a neighborhood $U_p$ and a smooth map $\psi_p: U_p \to M'$ such that $(\psi_p)^*(g'_{ab}) = g_{ab}$ at each point of $U_p$.\footnote{
    Note that in the literature, ``locally isometric'' denotes several distinct relations.
    For example, in contrast with this asymmetric relation, one can also define its symmetrization: spacetimes $(M,g_{ab})$ and $(M',g'_{ab})$ are \textit{(mutually) locally isometric} when each is locally isometric to the other.}
This condition captures a sense in which ``locally'' the spacetime $(M,g_{ab})$ is equivalent to, or has the same structure as, (some region or other of) $(M',g'_{ab})$, even though globally the two spacetimes could be completely different.

These definitions suggest a natural, literal interpretation of the claim that (any) spacetime is locally flat or locally Minkowskian.
\begin{description}
    \item[Literal Interpretation:] Every relativistic spacetime is locally isometric to a flat spacetime (e.g., a region of Minkowski spacetime).
\end{description}
Unfortunately, interpreted in this way, the claim is simply false: not every relativistic spacetime is locally flat in this sense.
And it would not help to restrict attention only to those spacetimes that \emph{are} locally flat in the sense, such as by claiming that it is only those spacetimes that are physically reasonable \citep[cf.][]{Manchak}.
Indeed: a relativistic spacetime is locally isometric to flat spacetime if and only if it is flat simpliciter, because Riemann curvature is preserved under isometry.
So if a spacetime is locally isometric to a flat spacetime, its Riemann curvature must vanish at every spacetime point.
Einstein's equation, meanwhile, implies that curvature is generally non-zero in the presence of matter.

Is this first interpretation ever endorsed in the literature?
Perhaps not in such an explicit, and obviously unacceptable, form.
But it is arguably a mere rephrasing of another interpretation that has been widely endorsed.
On this interpretation, spacetime is locally flat in the sense that one can always ``transform away'' arbitrary gravitational effects by choosing appropriate coordinates, much as one can fictitious forces.
Such interpretations identify gravitational effects not directly with curvature, but with the coordinate-dependent Christoffel symbols of the metric connection.
This claim was what many early commentators, such as \citet[pp.~705--6]{Pauli}, identified with the equivalence principle.\footnote{
    See \citet{NortonEP} or \citet{LehmkuhlEP} for a discussion of this point, and for a contrast with Einstein's own views.
    In a word, Einstein only claimed equivalence for a homogeneous gravitational field, i.e., constant Christoffel symbols \citep{JanssenRelative}.
    One can show that the Christoffel symbols are constant in a neighborhood iff the metric is flat there, so Einstein's view is \textit{not} an example of the following first pass of the Coordinate Chart Interpretation.}
\begin{description}
    \item[Coordinate Chart Interpretation, first pass:] In any sufficiently small region of any relativistic spacetime, coordinates may be chosen relative to which the Christoffel symbols of the Levi-Civita (i.e., unique torsion-free metric-compatible) connection vanish.
\end{description}
But once again, as has been observed by many others \citep[e.g.,][]{Eddington,Synge,Friedman1983,NortonEP}, this claim is false in general; and it is true of a spacetime precisely when that spacetime is flat.\footnote{
    The fact that this claim was refuted by Eddington in the 1920s did little to stop others from repeating it from time to time. %: see, for instance, \citet[p.~226]{Reichenbach1958} or \citet[p.~336]{Born1962}.
    Even \citet[p.~285]{MTW1973} seem to endorse this claim, for instance when they assert that ``one can always construct local inertial frames at a given event $\mathcal{P}_0$; and as viewed in such frames, free particles must move along straight lines, at least locally---which means $\Gamma^\alpha_{~\beta\gamma}$ must vanish, at least locally.''
    Here, the $\Gamma^\alpha_{~\beta\gamma}$ are the Christoffel symbols for the coordinate system generated by the mentioned frame.
    Similarly, some pages later they write: ``In every local region there exists a local frame (``freely falling frame'') in which all geodesics appear straight (all $\Gamma^\alpha_{~\beta\gamma} = 0)$'' \citep[p.~297]{MTW1973}.
    Now, Misner, Thorne, and Wheeler cannot truly mean to endorse this claim---and elsewhere in their book, they are more careful.
    Nonetheless, there is some value in emphasizing that this is false, given how frequently it has been repeated!}
Here the basic facts are that the Christoffel symbols are constant in an open neighborhood iff the Riemann tensor vanishes in that neighborhood, and that the Riemann tensor is a tensor---and thus it vanishes in any coordinate system iff it is the zero tensor.

The problem with these first two readings is that they insist that spacetime is flat ``in a neighborhood'' of any point, which can hold only if it is flat everywhere.
But sometimes, authors who appear to endorse such readings explain that the neighborhoods in question must be ``infinitesimal'' \citep[see, e.g.][p.~226]{Reichenbach1958}.
This suggests that perhaps local flatness should not be associated with an open set of spacetime points at all, but rather with the points themselves.
One possible reading of this idea would be that ``local flatness'' claims concern the structure of the tangent space at each point, since this space can be thought of as a representation of the linearized, or first-order, structure of an infinitesimal neighborhood of the point.
\citet[p.~71]{Brown1997} makes this link explicitly, writing that ``relative to local inertial frames (defined in the infinitesimal neighborhood of any event) all the laws of physics take on their special relativistic form.
Put another way, the tangent space structure in GR is everywhere `Lorentzian'.''\footnote{
    See also \citet[pp.~183--4]{Friedman1983}, who similarly draws a connection between Minkowski spacetime and the tangent space at each point of a relativistic spacetime.}

These considerations lead to our third interpretation:
\begin{description}
    \item[Tangent Space Interpretation:] The tangent space at a point of spacetime is, or is equivalent to, Minkowski spacetime.
\end{description}
At first blush, this proposal has something going for it.
The tangent space at any point of a spacetime manifold is a four dimensional vector space, which means, in particular, that it carries the structure of the smooth manifold $\mathbb{R}^4$, just as Minkowski spacetime does.
Moreover, the spacetime metric induces a Lorentz-signature metric on the tangent space, and so there is a sense in which the tangent space metrical structure is also arguably the same as that of Minkowski spacetime.

Nonetheless, there are in fact important differences between the tangent space of a relativistic spacetime  and Minkowski spacetime.
In the first place, the tangent space is a vector space, while Minkowski spacetime has the structure of an affine space.
The difference is significant, as the lack of a preferred point in the latter---the zero element in the former---precludes the classification of individual points as being spacelike, timelike, or null (as opposed to classifying pairs of points as ``spacelike [etc.] related'').
The difference also bears on the physical interpretations of the spaces.
The points of Minkowski spacetime represents spatiotemporal events, while a tangent space represents instantaneous directions of curves at one of those events.

An advocate for this interpretation might reply that ``local flatness'' means that infinitesimal neighborhoods of each point---that is, the tangent space---should be thought of as equivalent to Minkowski spacetime \emph{with a distinguished point}, since, after all, we are representing a neighborhood of a particular point.
Alternatively, one might argue that vector space structure is \emph{more} structure than affine space structure, and so if any point can be be associated with a vector space, then \emph{ipso facto} it can be associated with an affine space.
Fine.
But even if we set aside the structural differences between Minkowski spacetime and the tangent space at a point of a relativistic spacetime, if the tangent space interpretation is all that is meant by ``local flatness'', it is strikingly weak.
This is because Riemann curvature is a tensor field, and so it determines a tensor acting on the tangent space at each point.
Thus, even from the perspective of the tangent space at a point, one can ``see'' the curvature of spacetime near that point by considering the curvature tensor there.

More generally, curvature is a measure of the failure of parallel transport of vectors and tensors around (infinitesimally small!) closed curves to return a vector or tensor to its original value; in that sense, it is a characterization of the relationship between the tangent spaces at nearby points, as determined relative to some connection.
Observing that the tangent space at each point has the structure of a vector space with a Minkowskian inner product, though, says nothing at all about parallel transport in small neighborhoods of the point.
And so the claim that the tangent space is ``flat'' loses any relation to the meaning of curvature on a manifold in the first place.
At best, on this interpretation the claim that spacetime is locally flat amounts to the observation that points of spacetime can be associated with other spaces that are, in some sense, flat---without capturing any sense in which that flatness reflects or approximates the local curvature structure of the manifold.

\citet[p.~240]{torretti1996relativity} proposes a different take on the tangent space interpretation.
He writes, ``The Minkowski inner product on each tangent space induces---through the exponential mapping---a local approximate Minkowski geometry on a small neighborhood of each worldpoint.''
Fix a spacetime $(M,g_{ab})$.
The exponential map, at a point $p\in M$, is a map from a neighborhood $O_p$ of the $0$ vector in the tangent space $T_p M$ to some open set $U_p\subseteq M$ containing $p$; the map is defined relative to a derivative operator on $M$ by taking each vector $\xi^a\in O_p$ to the point $\gamma(1)$, where $\gamma$ is the unique geodesic through $p$ with tangent $\xi^a$ at $p$.
(Here $O_p\subset T_pM$ is chosen so that the exponential map is injective, which is always possible.)
The exponential map generates a diffeomorphism between $O_p$ (conceived as a manifold) and $U_p$, and it generates coordinates on that latter set with certain nice properties.
In particular: they are \emph{normal} coordinates at $p$, which means that the Christoffel symbols of the Levi-Civita derivative operator in those coodinates vanish at $p$, or in other words, the Levi-Civita derivative operator and the coordinate derivative operator agree there.
They are also \emph{Lorentz} coordinates, which means that the metric in those coordinates has the same coordinate representation at $p$ as the Minkowski metric in standard coordinates---that is, as the matrix $diag(1,-1,-1,-1)$.
Thus, there is a certain sense in which these coordinates are adapted to the metric and derivative operator at $p$---and since they are coordinates, and coordinate derivative operators are always flat, they can be thought of as generating a ``local Minkowskian'' structure on $U_p$ that agrees with the background spacetime structure at $p$ and approximates it elsewhere on $U_p$.

The fact that this local Minkowski geometry is only \emph{approximate} importantly distinguishes Torretti's claims from the literal interpretation discussed above.
As Torretti himself emphasizes, the existence of normal coordinates generated in this way in no sense implies that the spacetime is flat, even at $p$.
As he acknowledges, ``The mere fact that the tangent space has a Minkowskian \ldots inner product---as it obviously does everywhere, by definition, on the manifolds under consideration---says nothing whatsoever about the value of the Riemann tensor and the manifold's departure from flatness at that point'' \citep[p.~314n13]{torretti1996relativity}.

Torretti's invocation of Lorentz normal coordinates brings us to another common interpretation of the local flatness claim.
On this interpretation, it is the existence of certain normal coordinates at each point or along certain curves, such as timelike geodesics, that is supposed to capture the sense in what spacetime is locally flat.
\begin{description}
    \item[Coordinate Chart Interpretation, second pass:] At any point of any relativistic spacetime (or along certain curves), local coordinates may be chosen so that, at that point (or along that curve), (a) the components of the metric agree with the Minkowski metric in standard coordinates and (b) all Christoffel symbols vanish.
\end{description}
This is true.
And as we will discuss in the next section, it is very close to our own preferred interpretation.
But even so, we think this way of stating things obscures what is going on, for several reasons.

First, it is not clear what coordinates have to do with the basic claim of local flatness.
On the one hand, \emph{any} coordinate system gives rise to a flat derivative operator, and coordinates can always be used to define flat metrics.
If local flatness is nothing more than the observation that there exist coordinates in neighborhoods of any points, then, just as with the tangent space interpretation, the present interpretation seems too weak to be of interest.
In particular, it seems to say nothing about the local curvature at a point or nearby.
Now, it is true that on this interpretation one invokes \emph{special} coordinates---viz., Lorentz normal ones---but the significance of those coordinates requires further commentary.
What does the ``form'' of the metric in some coordinate system tell us about the metric or its (covariant) derivatives, all of which are coordinate independent structures?  What does it tell us about local curvature?

Likewise, what significance should be attributed to the fact that Christoffel symbols can be made to vanish at a point or along a curve?
After all, given any derivative operator, including any flat derivative operator, one can \emph{also} always find coordinates for which Christoffel symbols for that derivative operator do \emph{not} vanish at any point.
So the existence of coordinate systems in which Christoffel symbols do (or do not) vanish does not obviously reveal any coordinate-independent facts about the derivative operator.
(As we will see, there are such facts lurking in the background here; our point is that, without further discussion, the interpretation as stated does not seem like a perspicuous way of expressing them.)

A second set of issues concerns whether the interpretation is intended as an assertion of the existence of coordinates with certain properties (such as being normal), or if it is supposed to come with a further claim about the significance of particular coordinate systems---say, ones constructed from the exponential map, \`a la Torretti---in which case, it is not clear just which properties are supposed to be the ones that realize the claim about local flatness.
That is: is the relevant fact supposed to be the existence of \emph{normal} coordinates (or Lorentz normal coordinates)---or special classes of Lorentz normal coordinates generated via specific construction procedures?
This ambiguity also leads to confusion about whether it is Lorentz normal coordinates defined in a neighborhood of a point, ones defined in neighborhoods of certain curves, such as timelike geodesics, or perhaps a more general class of normal coordinates that are supposed to be the relevant ones.
If it is not clear what features of these coordinates are supposed to be salient, it is hopeless to try to establish the general existence or uniqueness conditions for such coordinates.

To make matters worse, some authors move quickly from the observation (or argument) that certain coordinates exist to comments about their physical significance.
For instance, some authors suggest that local flatness obtains because one can find a class of coordinates known as Fermi normal coordinates along timelike geodesics, which are constructed by parallel transporting along the geodesic an orthonormal frame whose timelike vector is tangent to the geodesic, and then extending it to a neighborhood of the curve by a construction analogous to that described for the exponential map; then they immediately go on to argue that Fermi normal coordinates are analogous to ``inertial frames'' in special relativity.\footnote{
    See, for instance, \citet[pp.~20--23]{schild1967lectures}, \citet[pp.~199--200]{Friedman1983}, or \citet[\S2]{Knox2013-KNOESG}.  \citet[pp.~11--12]{poisson2004relativist} offers a nice example of the ambiguity: he states a ``local flatness theorem'' that asserts the existence of Lorentz normal coordinates; he then proceeds to indicate that the particular coordinates he constructs to prove the theorem indicate something about what freely falling observers will ``see''.
    But Lorentz normal coordinates are not unique, and so it is unclear whether local flatness is an assertion about the existence of such coordinates in the first place, or one about the further interpretive significance of a special class of such coordinates, arrived at through Poisson's construction.}
The result is that it is unclear whether local flatness is meant to be a claim about the existence of certain coordinate systems, which in turn expresses something about the local geometry of relativistic spacetimes, or if it is supposed to include a further interpretive claim about idealized measurement apparatuses of natural motion, which, of course, would go far beyond any facts about curvature, local or otherwise.

Still, as we said above, we think this final interpretation does express something with meaningful content about the structure of relativistic spacetimes---something that is well-expressed by the claim ``spacetime is locally approximately flat''.
In the next section, we will restate and generalize this interpretation in a way that makes that content more perspicuous.

\section{In what sense is spacetime locally flat?}\label{sec:right}

We have now presented four interpretations of the claim ``spacetime is locally flat in general relativity''.
Two of these were unacceptable because they were simply false claims about relativistic spacetimes; the third was unacceptable because it had so little content that it seemed it could not do any foundational work at all.
The final one, related to the existence of normal coordinates on certain neighborhoods, does capture a sense in which spacetime is locally approximately flat---but we argued that common expressions of this interpretation in the literature are unsatisfactory.
We will now proceed to rephrase and generalize the final interpretation of the previous section.\footnote{\label{wallace}
    As will be clear presently, the relationship between the sense of ``local approximate flatness'' captured by Theorem \ref{thm:normal} and that expressed by the second coordinate chart interpretation above is very similar to that between ``intrinsic'' or coordinate-free characterizations of geometrical structures and ones that invoke classes of coordinate systems adapted to those structures \citep{Wallace}.
    To some extent, any preference between them is a matter of taste, and for many purposes, it is very useful to have \emph{both} characterizations available.
    But as we discuss below, reframing things as we do in this section permits one to capture the sense in which local flatness structure, though it always exists, is not unique---something that has been unclear in other discussions.}

We begin by stating and proving a theorem.\footnote{
    We do not claim that this theorem is notably original---it follows trivially from work by \citet{oRaifeartaigh1957fermi}.
    (See also \citet{iliev2006handbook}.)
    But we have not seen it stated in this form before, nor have we seen it discussed in the context of claims about local flatness in the philosophical literature.
    So we think there is some value in stating and proving it here.}
\begin{thm}[Local Flatness]\label{thm:normal} Given any spacetime $(M,g_{ab})$, any embedded curve $\gamma:I\rightarrow M$ therein, and any point $p\in \gamma[I]$, there exists, on some neighborhood $O$ containing $p$, a flat metric $\bar{g}_{ab}$ such that on $\gamma[I]\cap O$, (a) $g_{ab}=\bar{g}_{ab}$ and (b) $\nabla=\bar{\nabla}$, where $\nabla$ and $\bar{\nabla}$ are the Levi-Civita derivative operators associated with $g_{ab}$ and $\bar{g}_{ab}$, respectively.
\end{thm}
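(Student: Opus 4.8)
The plan is to reduce the coordinate-free statement to a concrete matching problem for the $2$-jet of a coordinate map, and then to build the flat metric as a pullback of the Minkowski metric. First I would use that $\gamma$ is embedded to choose coordinates $(x^0,\dots,x^3)$ on a neighborhood of $p$ in which $\gamma[I]$ is locally the $x^0$-axis $\{x^1=x^2=x^3=0\}$ and $p$ is the origin; write $s=x^0$ for the parameter along the curve. In these coordinates claim (a) becomes $\bar g_{\mu\nu}=g_{\mu\nu}$ on the axis, and claim (b)---equality of the Levi-Civita operators on the axis---becomes equality of Christoffel symbols on the axis, which, given (a), is implied by equality of all first partial derivatives $\partial_\lambda \bar g_{\mu\nu}=\partial_\lambda g_{\mu\nu}$ there. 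So it suffices to produce a flat $\bar g_{ab}$ whose components agree with those of $g_{ab}$ to first order along the axis.

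The key device for guaranteeing flatness is to seek $\bar g$ of the form $\bar g_{\mu\nu}=\eta_{\alpha\beta}\,(\partial_\mu y^\alpha)(\partial_\nu y^\beta)$ for some coordinate map $y=(y^0,\dots,y^3)$, where $\eta_{\alpha\beta}=\mathrm{diag}(1,-1,-1,-1)$: any such pullback of the Minkowski metric is automatically flat, and is automatically a Lorentz metric wherever $y$ is a local diffeomorphism. Matching to first order along the axis then becomes a matter of prescribing the $2$-jet of $y$ there. Writing $J^\alpha_\mu=\partial_\mu y^\alpha$, the zeroth-order condition is $\eta_{\alpha\beta}J^\alpha_\mu J^\beta_\nu=g_{\mu\nu}$ on the axis, which I would satisfy by choosing, smoothly in $s$, an orthonormal coframe $J^\alpha_\mu(s)$ for $g$ along $\gamma$---possible because a smooth orthonormal frame always exists over a one-dimensional base. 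The first-order condition is $\eta_{\alpha\beta}\big((\partial_\lambda J^\alpha_\mu)J^\beta_\nu+J^\alpha_\mu\,\partial_\lambda J^\beta_\nu\big)=\partial_\lambda g_{\mu\nu}$ on the axis, a linear system for the second derivatives $\partial_\lambda J^\alpha_\mu=\partial_\lambda\partial_\mu y^\alpha$.

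The main obstacle is the solvability of this last system: the unknowns $\partial_\lambda\partial_\mu y^\alpha$ must be symmetric in $(\lambda,\mu)$ by equality of mixed partials, and one must check that this symmetry does not obstruct reproducing the prescribed, $(\mu,\nu)$-symmetric right-hand side $\partial_\lambda g_{\mu\nu}$. Here I would observe that this is exactly the linear algebra underlying the fundamental theorem of (pseudo-)Riemannian geometry: the quantity $S_{\lambda\mu\nu}:=\eta_{\alpha\beta}(\partial_\lambda J^\alpha_\mu)J^\beta_\nu$ is to be chosen symmetric in $(\lambda,\mu)$ with $S_{\lambda\mu\nu}+S_{\lambda\nu\mu}=\partial_\lambda g_{\mu\nu}$, which is precisely the invertible correspondence between Christoffel symbols of the first kind and first derivatives of the metric. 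The usual Christoffel-type permutation formula thus yields an explicit $S$, and hence, since $J$ is invertible, the required second derivatives.

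Finally, having fixed a consistent $2$-jet for $y$ along the axis, I would realize it by an honest smooth map $y$ on a neighborhood $O$ of $p$---for instance, for each $s$, the second-order Taylor polynomial in $(x^1,x^2,x^3)$ with the chosen $s$-dependent coefficients---shrinking $O$ so that $Dy$ stays invertible (it is invertible at $p$, since $J(0)$ is). Then $\bar g_{ab}=y^*\eta$ is flat and Lorentzian on $O$ and agrees with $g_{ab}$ to first order along $\gamma[I]\cap O$, giving (a) directly and (b) via the resulting equality of Christoffel symbols at each point of the curve. I expect the only genuinely substantive point to be the jet-solvability step just described; everything else is the routine bookkeeping of adapted coordinates, frame existence, and extension of a prescribed jet.
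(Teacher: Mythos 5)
Your overall strategy---realizing $\bar g_{ab}$ as $y^*\eta$ for a map $y$ into Minkowski space and matching the $2$-jet of $y$ along the curve---can be made to work, but as written there is a genuine gap at the jet-solvability step, which is exactly the step you flag as the substantive one. You treat all second derivatives $\partial_\lambda\partial_\mu y^\alpha$ on the axis as free unknowns in a linear system to be solved by the Christoffel-type formula. But the components with a tangential index are not free: once you have chosen the coframe $J^\alpha_\mu(s)$ along the axis (and the restriction of $y$ to the axis), the quantities $\partial_0\partial_\mu y^\alpha=\tfrac{d}{ds}J^\alpha_\mu(s)$ are already determined, because a tangential derivative of data defined along the axis is just its $s$-derivative. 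Your own realization step makes this unavoidable: a map of the form $y^\alpha(s,x^i)=a^\alpha(s)+J^\alpha_i(s)x^i+\tfrac12 X^\alpha_{ij}(s)x^ix^j$ has $\partial_0\partial_i y^\alpha=\dot J^\alpha_i(s)$ on the axis no matter what you solved for. Consequently, the equations in your linear system that carry a $0$ index are \emph{constraints on the frame chosen at the zeroth-order step}, not equations for unknowns, and an arbitrary smooth orthonormal coframe fails them.

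To see this is a real failure and not bookkeeping, take $(M,g_{ab})$ to be Minkowski spacetime, $\gamma$ the $x^0$-axis, and let $J^\alpha_\mu(s)$ be the standard coframe rotated in the $x^1$--$x^2$ plane by an angle $\theta(s)$. This is orthonormal for every $s$, but the first-order conditions with $(\lambda,\mu,\nu)=(1,0,2)$ and $(2,0,1)$ reduce to $-\dot\theta+c=0$ and $\dot\theta+c=0$, where $c=\eta_{\alpha\beta}J^\alpha_0X^\beta_{12}$ is the single free unknown appearing in both (by symmetry of $X^\alpha_{ij}$); these are inconsistent unless $\dot\theta=0$. The missing idea is that the coframe along $\gamma$ must be chosen \emph{parallel with respect to $\nabla$ along $\gamma$}, i.e.\ $\dot J^\alpha_\mu=\Gamma^\rho{}_{0\mu}J^\alpha_\rho$; this is precisely the condition that the already-determined tangential data coincide with your Christoffel prescription, since then $\eta_{\alpha\beta}\dot J^\alpha_\mu J^\beta_\nu=\Gamma^\rho{}_{0\mu}g_{\rho\nu}=\tfrac12\bigl(\partial_0 g_{\mu\nu}+\partial_\mu g_{0\nu}-\partial_\nu g_{0\mu}\bigr)$, and parallel transport preserves orthonormality because $\nabla_a g_{bc}=\mathbf{0}$. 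With that single change the remaining unknowns---the purely transverse $X^\alpha_{ij}$---really are free, your solvability argument applies to them, and the rest of your proof (invertibility of $Dy$ near $p$, shrinking $O$, reading off (a) and (b)) goes through. The repaired argument is then close in spirit to the paper's own proof, which also hinges on parallel transport of an orthonormal frame: the paper invokes an existence theorem for a flat connection $\bar\nabla$ agreeing with $\nabla$ on the curve, parallel transports a frame over a whole neighborhood using $\bar\nabla$, and assembles $\bar g_{ab}$ from that frame; your jet/Taylor construction in effect replaces that cited existence theorem with an explicit coordinate construction, which is a nice self-contained alternative once the parallel-transport condition is imposed.
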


Here an ``embedded curve'' is a curve whose image is a one-dimensional embedded submanifold.
\begin{proof}
Given any torsion-free derivative operator $\hat{\nabla}$ on a smooth manifold $M$, there exists, in sufficiently small neighborhoods of sufficiently small segments of an arbitrary (non-self-intersecting) curve $\gamma$, flat derivative operators that agree with $\hat{\nabla}$ on the intersection of the neighborhood and segment of $\gamma$ \citep[Thm.~II.3.2]{iliev2006handbook}.
Let $\nabla$ be the Levi-Civita derivative operator compatible with $g_{ab}$.
For any point $p\in \gamma[I]$, choose some such flat derivative operator $\bar{\nabla}$ defined on a suitably small neighborhood $O$ meeting $\gamma[I]$ and agreeing with $\nabla$ on $\gamma[I]\cap O$.
(That $\gamma[I]$ is embedded implies that $O$ can be chosen so it intersects only the desired segment of $\gamma$.)
Pick, at $p$, an orthonormal frame $\{\overset{i}{u}{}^a\}_{i\in\{0,\ldots,3\}}$, with $\overset{0}{u}{}^a$ timelike, and extend it, by parallel transport using $\bar{\nabla}$, to all of $O$.
Define $\bar{g}_{ab}=\overset{0}{u}_a\overset{0}{u}_b - \sum_{i=1}^3 \overset{i}{u}_a\overset{i}{u}_b$ on $O$.
Then $\bar{g}_{ab}$ is flat by construction and $\bar{\nabla}$ is its Levi-Civita derivative operator.
Since $\gamma^c\nabla_c g_{ab} = \mathbf{0} = \gamma^c\bar{\nabla}_c\bar{g}_{ab} = \gamma^c\nabla_c\bar{g}_{ab}$ on $\gamma[I]\cap O$, where $\gamma^a$ is tangent to $\gamma$, $\gamma^c\nabla_c(g_{ab}-\bar{g}_{ab}) = \mathbf{0}$ on $\gamma[I]\cap O$, i.e., $g_{ab}-\bar{g}_{ab}$ is constant on $\gamma[I]\cap O$ (with respect to $\nabla$).
Since $g_{ab}-\bar{g}_{ab} = \mathbf{0}$ at $p$, it follows that $\bar{g}_{ab}$ agrees with $g_{ab}$ everywhere on $\gamma[I]\cap O$.
\end{proof}

There are several technical remarks to make before proceeding.  First, note that the theorem does not require $\gamma$ to be a geodesic or even timelike. Indeed, it can be generalized from embedded curves to embedded submanifolds with vanishing intrinsic curvature \citep[Thm.~II.5.1, II.5.2]{oRaifeartaigh1957fermi,iliev2006handbook}.
This shows that the interpretation of $\gamma$ as the worldline of an observer is not essential to the result: all that is needed is an intrinsically flat embedded submanifold, of which points and embedded curves are always examples.  Of course, the fact that such metrics exist for general curves specializes to the case of timelike geodesics.  Second, as we have set things up, the flat metrics exist only along segments of the image of a curve; in certain special cases, they can be extended to the entire curve.\footnote{\citet[\S II.3.1]{iliev2006handbook} has a nice discussion of this point.}  But for present purposes, we prefer to emphasize the more general, and also more local, claim, since after all it is \emph{local} approximate flatness that is at issue.  And finally, we note that our statement assumes embedded curves, and works in neighborhood of points of the image of the curve; one could relax the assumption that the curve is embedded, but then one would have to work with neighborhoods (in $M$) of images of neighborhoods of parameter values (in $I$), which seems less natural to us.

As we have indicated before, there is a certain sense in which Theorem \ref{thm:normal} expresses the same facts as a (strengthened) version of the second coordinate chart interpretation above.  In particular, anything that follows from the existence of normal coordinates also follows from the existence of the flat derivative operators considered here, and vice versa.  This is because normal coordinates always give rise to a flat derivative operator that will agree with the spacetime Levi-Civita derivative operator wherever Christoffel symbols vanish; and parallel transporting the spacetime metric off of the curve using this coordinate derivative operator will give rise to a flat metric that agrees with the spacetime metric on the curve.  Conversely, given a flat metric with the properties described in the theorem, one can always find an isometry from the region where the metric is defined to a region of Minkowski spacetime, and then use that isometry to pull back standard Minkowski coordinates; coordinates constructed in this way will automatically be normal.

Even so, we suggest that the most natural interpretation of the claim that spacetime is locally approximately flat is given by Theorem \ref{thm:normal}.  Why?  First, we have stated this result as a claim about the existence of certain structures on regions of spacetime---namely, a flat metric and derivative operator that coincide, along curve segments, with the spacetime metric and its derivative operator.  Moreover, this flat metric approximates the background metric near $p$, in a sense we can make precise.\footnote{
    We adapt the following definitions from \citet{Fletcher2020}, who treats the special case of approximate local spacetime symmetries, and the interpretation of the structures invoked from \citet[p.~20]{fletcher2018} and \citet[\S4]{fletcher2019reduction}.}
Fix a spacetime $(M,g_{ab})$ and any open set $U\subseteq M$ with compact closure.\footnote{
    Nothing in the definition demands specializing to the case of $U$ being relatively compact, but this restriction is most relevant for what follows.
    Implicit in this choice is understanding that for the present investigation, approximation at single spacetime points is insufficient but approximation across the entire spacetime is unnecessary.
    What is important seems to be approximation on extended but bounded regions.
    That is why we examine relatively compact regions, similarity across which can be captured with the compact-open topologies on spacetimes \citep{fletcher2016similarity}. }
Choose any smooth Riemannian metric $h_{ab}$ on $U$.
Physically, any such metric can be determined by a smooth, orthonormal frame field $\{\overset{i}{u}{}^a\}_{i\in\{0,\ldots,3\}}$ on $U$, as $\sum_{i=0}^3 \overset{i}{u}_a\overset{i}{u}_b$ is a smooth Riemmanian metric.
(It could instead be determined by a coordinate chart on $U$ in an analogous way; which one chooses is not essential for what follows.)
We may then define, relative to $h^{ab}$, a norm on covariant tensors $f_{a_1\cdots a_n}$ at a point by:\footnote{
    This definition can be extended to arbitrary tensors, but for present purposes only covariant ones are of interest, and so we limit attention to those to simplify notation.}
\[|f|_h = |h^{a_1 b_1} \cdots h^{a_n b_n} f_{a_1 \ldots a_n} f_{b_1 \ldots b_n} |^{1/2}.\]
(This norm is the Frobenius norm on the tensors expressed in terms of their components relative to the frame field.)
Using this family of norms, we can define a family of distance functions on tensors as:
\[d_U(f,f';h,k) = \max_{j \in \{0, \ldots, k\}} \sup_U |(\nabla)^j (f-f')|_h,\]
where $(\nabla)^{j}$ abbreviates ``act with $j$ derivatives'', $\nabla$ is the Levi-Civita derivative operator determined by $h_{ab}$, and $(f-f')$ abbreviates $f_{a_1\cdots a_n}-f'_{a_1\cdots a_n}$.
What this distance function does is return the greatest distance, relative to $h_{ab}$, between $f$ and $f'$ or any of their first $k$ derivatives, ranging over all points in $U$.
And the distance between $f$ and $f'$ (or their derivatives, respectively) is simply the Euclidean magnitude of the differences in their components expressed with respect to the frame field $\{\overset{i}{u}{}^a\}_{i\in\{0,\ldots,3\}}$.

The case of greatest interest here will be when we use distance functions defined in this way to measure distances between different Lorentzian metrics on $U$.
Indeed, let $g_{ab}$, $\bar{g}_{ab}$, and $O$ be as in the statement of Theorem \ref{thm:normal}.
Then it immediately follows from the smoothness of $g_{ab}$ and $\bar{g}_{ab}$ that for any $h_{ab}$ on $O$ and any $\epsilon > 0$, there exists a neighborhood $U\subseteq O$ such that $d_U(g,\bar{g};h,1) < \epsilon$.
Thus we see that not only do the two metrics coincide at $p$, but they also approximate one another, to first order, arbitrarily well in sufficiently small neighborhoods of $p$.
Setting things up this way makes clear the sense in which there is a flat \emph{spacetime structure} that approximates an arbitrary metric and derivative operator near a point or (arbitrary) curve (as opposed to admitting a representation, such as $diag(1,-1,-1,-1)$, in which it has a particular syntactic form), which we take to be an assertion of ``approximate flatness'' with more clearly defined implications.
In other words, Theorem \ref{thm:normal} captures the sense in which every relativistic spacetime can be approximated locally by a flat spacetime.

As we saw above, claims about local flatness are sometimes expressed as the assertion that (every) spacetime is locally (approximately) Minkowskian.  This idea can be captured in the present context using Theorem \ref{thm:normal} and the notion of ``approximate isometry'' introduced by \citet{Fletcher2020}.  With this distance function in hand, consider spacetimes $(M,g_{ab})$ and $(M',g'_{ab})$ with open subsets $U \subseteq M$ and $U' \subseteq M'$, respectively, both of compact closure, and suppose that there is a diffeomorphism $\chi : U' \to U$.
Then given any Riemannian metric $h_{ab}$ on $U$ and any integer $k\geq 0$, we say that $\chi$ is a $(h,k,\epsilon)$-isometry between $(U',g'_{ab})$ and $(U,g_{ab})$ whenever $d_U(g,\chi_*(g');h,k) < \epsilon$.\footnote{Here and in what follows, if $(M,g_{ab})$ is a spacetime and $U$ is an open subset of $M$, we will use ``$(U,g_{ab})$'' to denote the spacetime with manifold $U$ and metric $g_{ab}$ restricted to $U$.  We will not explicitly indicate that $g_{ab}$ is restricted in this way.}  This comports with the fact that $\chi$ is an isometry simpliciter between $(U',g'_{ab})$ and $(U,g_{ab})$ when $d_U(g,\chi_*(g');h,k) = 0$ for all $h$ and $k$.  Such $(h,k,\epsilon)$-isometries can be thought of as ``approximate isometries'', since they capture the idea that the two metrics agree only to a degree of approximation ($\epsilon$) with respect to some reference structure ($h_{ab}$) and only up to some fixed order of differentiation ($k$).

Theorem \ref{thm:normal} then has the following corollary.
\begin{cor}\label{cor:normal}
Given any spacetime $(M,g_{ab})$, embedded curve $\gamma:I\rightarrow M$, point $p\in \gamma[I]$, compact neighborhood $U$ of $p$, Riemannian metric $h$ on $U$, real $\epsilon > 0$, and point $p'$ in Minkowski spacetime $(\Reals^4,\eta_{ab})$, there exist neighborhoods $O \ni p$ and $O' \ni p'$, an embedded curve $\gamma':I'\rightarrow \Reals^4$ with $p' \in \gamma'[I']$, and an $(h,1,\epsilon)$-isometry $\chi: O' \to O$ between $(O,g_{ab})$ and $(O',\eta_{ab})$ satisfying $\chi \circ \gamma' = \gamma$ on $I'$ and $\chi^*(g_{ab})=\eta_{ab}$ on $\gamma'[I']$.
\end{cor}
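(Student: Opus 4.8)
The plan is to build the desired approximate isometry directly out of the flat metric $\bar{g}_{ab}$ supplied by Theorem \ref{thm:normal}, using the fact that a flat Lorentzian metric is locally isometric to Minkowski spacetime. First I would apply Theorem \ref{thm:normal} to $(M,g_{ab})$, $\gamma$, and $p$ to obtain a neighborhood $O_0 \ni p$ and a flat metric $\bar{g}_{ab}$ on $O_0$ with $g_{ab}=\bar{g}_{ab}$ and $\nabla=\bar{\nabla}$ along $\gamma[I]\cap O_0$. Shrinking, I may assume $O_0$ lies in the interior of $U$, so that $h$ and its Levi-Civita derivative operator are defined throughout. The discussion following Theorem \ref{thm:normal} already establishes that, for the given $h$ and $\epsilon$, there is a neighborhood $O\subseteq O_0$ (which I take to have compact closure $\overline{O}\subseteq U$, possible since $U$ is compact) on which $d_O(g,\bar{g};h,1)<\epsilon$; this is the estimate that will eventually discharge the approximate-isometry clause.

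Next I would exploit the flatness of $\bar{g}_{ab}$. As noted in the same discussion, because $\bar{g}_{ab}$ is a flat Lorentzian metric, on a sufficiently small (simply connected) neighborhood it is isometric to a region of Minkowski spacetime: there is a diffeomorphism $\phi$ from such a neighborhood onto an open subset of $\Reals^4$ with $\phi_*(\bar{g})=\eta$. Since Minkowski spacetime is homogeneous and translations are isometries of $\eta_{ab}$, I would compose $\phi$ with the translation carrying $\phi(p)$ to $p'$, obtaining an isometry $\psi$ of $(O,\bar{g}_{ab})$ onto an open set $O'=\psi[O]\subseteq\Reals^4$ with $\psi(p)=p'$ and $\psi_*(\bar{g})=\eta$. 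I then set $\chi=\psi^{-1}:O'\to O$ and define the Minkowski curve by $\gamma'=\psi\circ\gamma$ on $I':=\gamma^{-1}[O]$ (the component containing the preimage of $p$); since $\gamma$ is embedded and $\psi$ is a diffeomorphism, $\gamma'$ is an embedded curve, and $p'=\psi(p)\in\gamma'[I']$.

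Finally I would verify the three remaining conditions, all of which reduce to bookkeeping with push-forwards. For the curve-compatibility clause, $\chi\circ\gamma'=\psi^{-1}\circ\psi\circ\gamma=\gamma$ on $I'$ is immediate. For the exact-agreement clause, $\chi^*=\psi_*$, and since $g=\bar{g}$ on $\gamma[I]\cap O$ while $\psi_*(\bar{g})=\eta$, I get $\chi^*(g)=\psi_*(g)=\psi_*(\bar{g})=\eta$ on $\gamma'[I']$. For the approximate-isometry clause, the key identity is $\chi_*(\eta)=\psi^*(\eta)=\bar{g}$, so that $d_O(g,\chi_*(\eta);h,1)=d_O(g,\bar{g};h,1)<\epsilon$ by the estimate secured in the first step. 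The only genuine subtlety—and the step I would be most careful about—is arranging all of these on a single common neighborhood $O$ (with compact closure, inside both $O_0$ and the domain of $\psi$, and small enough for the $\epsilon$-estimate) and correctly identifying $\chi_*=\psi^*$ and $\chi^*=\psi_*$; once those identifications are in place, the corollary is essentially a repackaging of Theorem \ref{thm:normal} into the language of approximate isometries, with the homogeneity of Minkowski spacetime doing the extra work of placing the distinguished point at the prescribed $p'$.
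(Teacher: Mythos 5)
Your proposal is correct and follows essentially the same route as the paper's own proof: apply Theorem \ref{thm:normal}, shrink the neighborhood so that the first-order estimate $d_O(g,\bar{g};h,1)<\epsilon$ holds, use the flatness of $\bar{g}_{ab}$ to obtain an isometric embedding $\psi$ into $(\Reals^4,\eta_{ab})$ normalized so $\psi(p)=p'$, and set $\chi=\psi^{-1}$, $\gamma'=\psi\circ\gamma$. Your explicit bookkeeping of $\chi^*=\psi_*$ and $\chi_*=\psi^*$ is a welcome clarification of a step the paper leaves implicit.
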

\begin{proof}
By Theorem \ref{thm:normal}, there is a neighborhood $O \ni p$ on which exists a flat metric $\bar{g}_{ab}$ such that on $\gamma[I]\cap O$, $g_{ab}=\bar{g}_{ab}$ and $\nabla=\bar{\nabla}$, where $\nabla$ and $\bar{\nabla}$ are the Levi-Civita derivative operators associated with $g_{ab}$ and $\bar{g}_{ab}$, respectively. Furthermore, $O$ can be chosen to be relatively compact and sufficiently small to be in $U$, diffeomorphic to $\Reals^4$, and satisfy $d_O(g,\bar{g};h,1) < \epsilon$.
This last property follows from the facts that $h_{ab}$ is smooth and $O$ is a relatively compact neighborhood of $\gamma[I]\cap O$. Therefore there is an isometric embedding $\psi$ of $(O,\bar{g}_{ab})$ into $(\Reals^4,\eta_{ab})$ which, without loss of generality, can be chosen so that $\psi(p)=p'$.
Then define $I' = \gamma^{-1}[\gamma[I] \cap O]$, $\gamma'= \psi \circ \gamma_{|I'}$, $O' = \psi[O]$, and $\chi = \psi^{-1}_{|O'}$.
By definition, $p' \in \gamma'[I']$, $\chi \circ \gamma' = \gamma$ on $I'$, and $\chi^*(g)=\chi^*(\bar{g})=\eta$ on $\gamma'[I']$, hence $d_O(g,\phi_*(\eta);h,1) < \epsilon$.
\end{proof}

This corollary captures the sense in which every spacetime is ``locally approximately Minkowski'', to first order, in neighborhoods of any embedded curve.  One can also state a sense in which this local approximation holds \emph{only} to order 1.  In fact, we have the following:
\begin{rem}
Cor. \ref{cor:normal} holds as stated, but with $k>1$, only if the Riemann tensor associated with $g_{ab}$ vanishes everywhere.
\end{rem}
\noindent Thus we see that even \emph{approximate} local flatness to order 2 or greater can hold only for spacetimes that are flat simplicter.
This result follows from the tensorial character of curvature as the deviation of the second covariant derivative of the metric from zero.
Any metric with non-zero curvature at a point $p$ will fail to fall within $\epsilon$ of a flat metric in any neighborhood of that point, to order 2 or greater, for any fixed $h_{ab}$ and sufficiently small $\epsilon$.

Yet another advantage of the present approach to local approximate flatness is that is clarifies the uniqueness properties of this approximating structure.
As we noted above, in general there are \emph{many} normal coordinates associated with any point or curve; one might wonder whether some of them are privileged, or even how they are related.
Theorem \ref{thm:normal} provides insight into this situation.
First, given a flat metric with the properties described in the Theorem, there will be be many different normal coordinates adapted to that metric, corresponding to different choices of standard Minkowski coordinates.
So we see clearly that if the flat metric is the structure we care about, then it cannot be a particular choice of normal coordinates that is privileged; at best, it is an equivalence class of such coordinates that are all adapted to the same flat metric.

On the other hand, we also have the following.
\begin{cor} In general, for any sufficiently small neighborhood of any point on the image of an embedded curve in a spacetime, there are (infinitely) many flat metrics have the properties described in Theorem \ref{thm:normal}.\footnote{
    Another construction is available that may help drive the point home.
    We will describe it for geodesics, as it's simplest to state in that context; similar constructions are available for general curves.
    Let $\gamma:I\rightarrow M$ be a geodesic in a spacetime $(M,g_{ab})$ with Levi-Civita derivative $\nabla$, and let $\bar{\nabla}$ be a flat derivative operator on a neighborhood $O$ of (some point of) $\gamma[I]$ agreeing with $\nabla$ on $\gamma[I]$.
    Then $\bar{\nabla}'=(\bar{\nabla},\alpha\gamma^a{}\bar{\nabla}_b\alpha \bar{\nabla}_c\alpha)$, where $\gamma^a$ is the unit tangent to $\gamma$ parallel transported to all of $O$ with $\bar{\nabla}$ and $\alpha$ is any scalar field such that (a) $\alpha=0$ on $\gamma[I]\cap O$; (b) $\bar{\nabla}_a\alpha\neq \mathbf{0}$ on $O\cap \gamma[I]$; and (c) $\bar{\nabla}_b\bar{\nabla}_a\alpha = \kappa\bar{\nabla}_b\alpha\bar{\nabla}_a\alpha$ for some smooth scalar field $\kappa$ on $O$, will be both flat and agree with $\bar{\nabla}$ on $\gamma$ but not agree with it everywhere.
    See also \citep[II.3]{iliev2006handbook} for a complete discussion of the freedom here.
    (A previous version of this footnote contained a typo.  We are grateful to James Read, Nic Teh, and Niels Linnemann for drawing it to our attention.)}
\end{cor}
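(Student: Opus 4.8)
The plan is to reduce the problem of counting flat metrics with the properties of Theorem~\ref{thm:normal} to the easier problem of counting flat derivative operators, and then to exhibit an explicit infinite family of the latter.

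First I would establish the reduction. Fix $p$, $\gamma$, and a neighborhood $O$ as in Theorem~\ref{thm:normal}, chosen small enough to be diffeomorphic to $\Reals^4$ (hence connected and simply connected, so that parallel transport of any flat torsion-free operator on $O$ is path-independent and reaches every point from $p$). Consider the map sending a flat metric $\bar{g}_{ab}$ satisfying (a) and (b) to its Levi-Civita derivative operator. I claim this is a bijection onto the set $\mathcal{F}$ of flat, torsion-free derivative operators on $O$ that agree with $\nabla$ on $\gamma[I]\cap O$. For injectivity, suppose two such metrics $\bar{g}_{ab}$ and $\bar{g}'_{ab}$ share a Levi-Civita operator $\bar{\nabla}'$; then both are $\bar{\nabla}'$-parallel and both equal $g_{ab}$ at $p$, so their difference is a $\bar{\nabla}'$-parallel tensor vanishing at $p$, hence vanishing throughout $O$ by path-independence, i.e.\ the metrics coincide. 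For surjectivity, given any $\bar{\nabla}' \in \mathcal{F}$, I would repeat the construction in the proof of Theorem~\ref{thm:normal} verbatim: parallel transport (using $\bar{\nabla}'$) a $g_{ab}$-orthonormal frame at $p$ to all of $O$ and define $\bar{g}'_{ab}$ from it. The resulting metric is flat with Levi-Civita operator $\bar{\nabla}'$ and equals $g_{ab}$ at $p$; and because $\bar{\nabla}' = \nabla$ on $\gamma[I]\cap O$ while each operator annihilates its own metric there, one has $\gamma^c\nabla_c(g_{ab}-\bar{g}'_{ab}) = \mathbf{0}$, so the same constant-along-$\gamma$ argument used in Theorem~\ref{thm:normal} forces $\bar{g}'_{ab} = g_{ab}$ on $\gamma[I]\cap O$. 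Thus it suffices to show that $\mathcal{F}$ is infinite.

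Second, I would exhibit infinitely many members of $\mathcal{F}$. Starting from the particular $\bar{\nabla}$ supplied by Theorem~\ref{thm:normal}, any other torsion-free operator on $O$ differs from it by a symmetric tensor field $C^a{}_{bc}$, and such an operator agrees with $\bar{\nabla}$ (hence with $\nabla$) on $\gamma[I]\cap O$ precisely when $C^a{}_{bc}$ vanishes there. The construction recorded for geodesics in the footnote to this corollary, and in more generality in \citet[\S II.3]{iliev2006handbook}, supplies one such $C^a{}_{bc}$ that vanishes on $\gamma[I]\cap O$, is nonzero off it, and for which the perturbed operator remains flat. Replacing the defining scalar field $\alpha$ by $t\alpha$ for $t \in \Reals$ rescales this tensor to $t^3 C^a{}_{bc}$, which is nonzero off $\gamma[I]\cap O$ whenever $t \neq 0$ and distinct for distinct $t$; this produces a one-parameter family of pairwise-distinct flat operators in $\mathcal{F}$, so $\mathcal{F}$ is infinite. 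By the bijection, there are infinitely many flat metrics with the stated properties.

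The hard part is the surjectivity and flatness bookkeeping rather than the counting itself. Verifying that each perturbed operator is genuinely flat amounts to checking that its Riemann tensor vanishes, which is exactly where the curvature condition on $\alpha$ in the footnote is needed; and the surjectivity step requires care that the parallel-transport construction really does reproduce the prescribed operator as the Levi-Civita connection of the metric it builds. For general (non-geodesic) embedded curves the explicit rescaling family must be replaced by the corresponding constructions of \citet[\S II.3]{iliev2006handbook}, but the logical skeleton — a bijection with $\mathcal{F}$, followed by an argument that $\mathcal{F}$ is infinite — is unchanged.
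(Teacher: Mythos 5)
Your proof is correct, but it takes a genuinely different route from the paper's. The paper never counts derivative operators: it takes the single flat metric $\bar{g}_{ab}$ supplied by Theorem \ref{thm:normal} on $O$ and pulls it back along diffeomorphisms $\varphi:O\rightarrow O$ that fix $\gamma[I]\cap O$ pointwise, whose pushforwards act as the identity on tangent spaces along the curve, and which are not the identity elsewhere; flatness of $\varphi^*(\bar{g}_{ab})$ is immediate because pullbacks preserve the Riemann tensor, agreement with $g_{ab}$ and $\nabla$ on $\gamma[I]\cap O$ follows from a short computation showing that the difference tensor between $\varphi^*(\nabla)$ and $\nabla$ vanishes there, and non-uniqueness comes from $\varphi$ moving points off the curve. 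That argument is self-contained and uniform in the curve---it applies verbatim to non-geodesic embedded curves---and it locates the non-uniqueness in diffeomorphism freedom relative to the curve. Your two-step scheme buys something different: the bijection between flat metrics satisfying conditions (a) and (b) of Theorem \ref{thm:normal} and the set $\mathcal{F}$ of flat torsion-free operators agreeing with $\nabla$ on $\gamma[I]\cap O$ is correct (in fact only surjectivity is needed, since distinct operators automatically arise from distinct metrics), and it is a structural clarification the paper leaves implicit; moreover, your rescaled family is a more explicit exhibition of infinitude than the paper's closing remark that $\varphi^*(\bar{g}_{ab})$ will ``in general'' differ from $\bar{g}_{ab}$. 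The cost is that your crucial step---flatness of the perturbed operators---is inherited from the footnote's unproved assertion, and for non-geodesic curves your argument defers wholesale to \citet[\S II.3]{iliev2006handbook}. That reliance is easy to discharge in the geodesic case: writing $\alpha_a=\bar{\nabla}_a\alpha$, so that $C^a{}_{bc}=\alpha\gamma^a\alpha_b\alpha_c$, the footnote's conditions on $\alpha$ together with parallelism of $\gamma^a$ give $\bar{\nabla}_cC^a{}_{db}=(1+2\kappa\alpha)\,\gamma^a\alpha_b\alpha_c\alpha_d$ and $C^n{}_{bc}C^a{}_{dn}=\alpha^2(\gamma^n\alpha_n)\,\gamma^a\alpha_b\alpha_c\alpha_d$, both totally symmetric in their covariant indices, so the antisymmetrized correction terms in the Riemann tensor of $\bar{\nabla}'$ all vanish and $\bar{\nabla}'$ is flat; your check that $\alpha\mapsto t\alpha$ preserves those conditions (with $\kappa\mapsto\kappa/t$) and yields pairwise distinct operators then goes through. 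With that verification added, your proof is complete for geodesics; for the general embedded curves over which the corollary actually quantifies, the paper's diffeomorphism-pullback argument remains the more economical path.
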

\begin{proof}
Fix a spacetime $(M,g_{ab})$ and curve $\gamma:I\rightarrow M$.
Choose any point $p\in \gamma[I]$ and let $O$ be some neighborhood of $p$ on which there exists a flat metric $\bar{g}_{ab}$ that agrees with $g_{ab}$, to first order, on $\gamma[I]\cap O$.
Let $\varphi:O\rightarrow O$ be any diffeomorphism on $O$ such that $\varphi$ takes each point in  $\gamma[I]\cap O$ to itself, $\varphi_*$ acts as the identity on the tangent space of each point on $\gamma[I]\cap O$, and $\varphi$ is not the identity outside $\gamma[I]\cap O$.
Then we claim $\varphi^*(\bar{g}_{ab})$ will also be flat and agree with $g_{ab}$ to first order on $\gamma[I]\cap O$.
For flatness, observe that the pullback of a metric always preserves its Riemann tensor.
For agreement, note that there exists a smooth field $C^a{}_{bc}$ on $O$ such that for any vector field $\xi^a$, $(\varphi^*(\nabla)_a - \nabla_a)\xi^b = C^b{}_{an}\xi^n$.
Now choose any vector fields $\xi^a,\eta^a$ on $O$.
Then we have, for any point $p\in\gamma[I]\cap O$, $C^b{}_{an}\eta^a\xi^n = \eta^a(\varphi^*(\nabla)_a - \nabla_a)\xi^b=\varphi_*(\eta^a)\nabla_a\varphi_*(\xi^b) - \eta^a\nabla_a\xi^b=\mathbf{0}$, where the second equality follows from the definition of the pullback on derivative operators and the third equality follows from the fact that $\varphi_*$ acts as the identity on the tangent space at each point of $\gamma[I]\cap O$ by construction.
Since $\eta^a$ and $\xi^a$ were arbitrary, it follows that $C^a{}_{bc}=\mathbf{0}$ on $\gamma[I]\cap O$. Finally, we note that since $\varphi$ is not the identity outside $\gamma[I]$, in general $\varphi^*(g_{ab})\neq g_{ab}$ on $O$.
\end{proof}

Thus, the failure of uniqueness of normal coordinates corresponds not just to the fact that there are many coordinate systems adapted to a given flat metric, but \emph{also} that there are many distinct flat metrics that approximate a given (curved) metric along a curve.
It follows that while every spacetime is locally approximately flat, none is canonically so, as there are many flat metrics locally approximating any given metric at any given point.
This failure of uniqueness is obscured, in our view, by approaches that focus on particular construction procedures, or on the existence of certain coordinates, because not all such flat metrics (or normal coordinates) arise from a single construction procedure.

Thus we see that the sense of local flatness captured by Theorem 1 offers some immediate advantages, mostly of clarity and elegance, over the second coordinate chart interpretation, above.
How significant are these advantages?  On the one hand, the claims we have made thus far could be rephrased using Lorentz normal coordinates, and they would still follow.
In that sense we have added nothing.
Nonetheless, we suggest that thinking in terms of flat approximating metrics leads to a fruitfully different perspective on a number of foundational questions, both about spacetime geometry and matter dynamics.
The remainder of this paper, and its sequel, are devoted to exploring aspects of that perspective.
Ultimately, the value of this way of thinking will turn on how fruitful it is in these applications (and others).

In the next two sections, we will draw out some further consequences of Theorem \ref{thm:normal} and Corollary \ref{cor:normal}.

\section{Is Flatness Special?}\label{sec:triviality}

In section \ref{sec:right}, we presented two ways of expressing the sense in which any relativistic spacetime is locally approximately flat.
We also argued that these statements expressed the strongest sense of local approximate flatness available---at least insofar as one cannot achieve approximation to order greater than 1.
Now we turn to the question of how to best understand the significance of these results.
In particular, we wish to investigate the role of \emph{flatness} in claims about local flatness.
When one claims that spacetime is locally (approximately) flat, or Minkowskian, should we understand such claims as implying that Minkowski spacetime is distinguished in this regard?
Is Minkowski spacetime, or regions thereof, the only spacetime that locally approximates all others?

As a first remark, there is a sense in which Minkowski spacetime is distinguished from other spacetimes, in a way that makes the fact that every spacetime is locally approximately flat especially salient.
This is because flat spacetime is often a much more convenient setting for performing calculations and other analyses, and often useful constructions---such as Fourier decompositions, vector and tensor integration, and so on---are only generally defined in that context.\footnote{
    There are special cases that admit of generalizations to curves spacetimes, such as integration involving differential forms and partial Fourier decompositions in spacetimes admitting certain symmetries (e.g., stationarity), but these are not the general cases to which we refer here.
}
It is useful to be able to immediately extend such constructions to approximate versions near a point or curve.
For this reason, Theorem \ref{thm:normal} and Corollary \ref{cor:normal} are of considerable pragmatic value.

But one might still ask whether there is a deeper sense in which flat spacetime is distinguished in Theorem \ref{thm:normal}.  And the answer is ``no''.  This result is more naturally expressed using the resources of Corollary \ref{cor:normal}.  While Minkowski spacetime features in the statement of that result in the above section, it can in fact be replaced with an arbitrary spacetime:
\begin{thm}\label{thm:universal}
Given any spacetime $(M,g_{ab})$, embedded curve $\gamma: I \to M$, point $p \in \gamma[I]$, compact neighborhood $U$ of $p$, Riemannian metric $h_{ab}$ on $U$, real $\epsilon > 0$, spacetime $(M',g'_{ab})$, and point $p' \in M'$, there exist neighborhoods $O \ni p$ and $O' \ni p'$, an embedded curve $\gamma': I' \to M'$ with $p' \in \gamma'[I']$, and an $(h,1,\epsilon)$-isometry $\chi: O' \to O$ between $(O,g_{ab})$ and $(O',g'_{ab})$ satisfying $\chi \circ \gamma' = \gamma$ on $I'$ and $\chi^*(g_{ab})=g'_{ab}$ on $\gamma'[I']$.
\end{thm}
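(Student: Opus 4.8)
The plan is to prove Theorem \ref{thm:universal} by the method of Corollary \ref{cor:normal}, but applied symmetrically to \emph{both} spacetimes, using flat Minkowski spacetime $(\Reals^4,\eta_{ab})$ as an intermediary. First I would apply Theorem \ref{thm:normal} to $(M,g_{ab})$ along $\gamma$ at $p$, obtaining a flat metric $\bar{g}_{ab}$ on a neighborhood $O_0 \ni p$ with $g_{ab}=\bar{g}_{ab}$ and $\nabla=\bar{\nabla}$ on $\gamma[I]\cap O_0$. Since $\bar{g}_{ab}$ is flat, there is an isometric embedding $\psi:(O_0,\bar{g}_{ab})\to(\Reals^4,\eta_{ab})$ with $\psi(p)=\mathbf{0}$, and its image $c:=\psi\circ\gamma$ is an embedded curve through the origin. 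The analogous construction on $(M',g'_{ab})$ would yield a flat $\bar{g}'_{ab}$ on some $O'_0\ni p'$ and an embedding $\psi'$ into Minkowski, so that the composite $\chi:=\psi^{-1}\circ\psi'$ is an \emph{exact} isometry between the two flat metrics. The entire argument then turns on arranging that $\psi'$ carries the (still unchosen) curve $\gamma'$ onto the \emph{same} Minkowski curve $c$, for then $\chi\circ\gamma'=\psi^{-1}(c)=\gamma$, exactly as required.

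The key step---and the main obstacle---is therefore to produce, in $M'$, an embedded curve $\gamma'$ through $p'$ whose associated flat chart sends it to the prescribed curve $c$. I would obtain $\gamma'$ by \emph{developing} $c$ into $M'$: fix a $g'$-orthonormal frame at $p'$, identifying $T_{p'}M'$ with $\Reals^4$, and solve the development ODE whose solution $\gamma'$ has, in the $\nabla'$-parallel frame along it, the same velocity components as $c$ has in the standard frame of $\Reals^4$. Standard ODE theory furnishes such a $\gamma'$ on a sufficiently short segment, and it is embedded because $c$ is. Applying Theorem \ref{thm:normal} to $(M',g'_{ab})$ along \emph{this} $\gamma'$ yields a flat $\bar{g}'_{ab}$ on $O'_0$ with $g'_{ab}=\bar{g}'_{ab}$ and $\nabla'=\bar{\nabla}'$ on $\gamma'[I']\cap O'_0$; and since $\bar{\nabla}'$ agrees with $\nabla'$ along $\gamma'$, the flat chart $\psi'$ develops $\gamma'$ along exactly the curve used in the ODE, so after aligning the frame at $p'$ one has $\psi'\circ\gamma'=c$. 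Setting $\chi=\psi^{-1}\circ\psi'$ (shrinking so the composite is defined) gives a diffeomorphism $\chi:O'\to O$ that is an exact isometry of the flat metrics with $\chi\circ\gamma'=\gamma$. Condition $\chi^*(g_{ab})=g'_{ab}$ on $\gamma'[I']$ then follows immediately, since there $\chi^*(g_{ab})=\chi^*(\bar{g}_{ab})=\bar{g}'_{ab}=g'_{ab}$, using $g_{ab}=\bar{g}_{ab}$ on $\gamma[I]$ and $g'_{ab}=\bar{g}'_{ab}$ on $\gamma'[I']$.

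It remains to verify the approximate estimate $d_O(g,\chi_*(g');h,1)<\epsilon$, which I would handle exactly as in Corollary \ref{cor:normal} via the triangle inequality
\[ d_O(g,\chi_*(g');h,1) \le d_O(g,\bar{g};h,1) + d_O(\bar{g},\chi_*(g');h,1). \]
The first term is made smaller than $\epsilon/2$ by shrinking $O$: the conditions $g_{ab}=\bar{g}_{ab}$ and $\nabla=\bar{\nabla}$ on $\gamma[I]\cap O$ force $\nabla(g-\bar{g})=-\bar{\nabla}\bar{g}=\mathbf{0}$ there, so $g_{ab}-\bar{g}_{ab}$ and its first derivatives all vanish along the curve, whence both it and its $h$-derivative tend to $\mathbf{0}$ near $\gamma[I]\cap O$. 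For the second term, because $\chi$ is a flat isometry one has $\chi_*(\bar{g}')=\bar{g}$, so $\bar{g}-\chi_*(g')=\chi_*(\bar{g}'-g')$; and $\bar{g}'_{ab}-g'_{ab}$ likewise vanishes to first order along $\gamma'[I']\cap O'$, so its pushforward---a smooth tensor vanishing to first order along $\gamma[I]\cap O$, since $\chi\circ\gamma'=\gamma$---can be made smaller than $\epsilon/2$ in the $d_O(\,\cdot\,;h,1)$ sense by further shrinking $O$ and the corresponding preimage $O'$. Summing yields the bound, and one finally shrinks $O$ once more to lie in $U$, giving the required $(h,1,\epsilon)$-isometry.

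The only genuinely new ingredient beyond Corollary \ref{cor:normal} is the development construction of $\gamma'$; everything else is a two-sided application of Theorem \ref{thm:normal} glued through Minkowski, together with the same smoothness-and-shrinking estimate used there. I expect the development step to be where care is needed---both to guarantee that $\gamma'$ is embedded on a short enough segment and to normalize $\psi'$ so that $\psi'\circ\gamma'=c$ \emph{exactly}---and one could alternatively invoke the prescribed-curve normal-coordinate constructions of \citet{oRaifeartaigh1957fermi} and \citet{iliev2006handbook} to avoid writing the ODE out explicitly.
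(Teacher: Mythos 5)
Your proof is correct, and its skeleton matches the paper's: apply the local-flatness machinery on both sides, glue through Minkowski spacetime, and extract the $(h,1,\epsilon)$ bound from a triangle inequality with two $\epsilon/2$ terms. But the step you identified as the crux---how to produce $\gamma'$ in $M'$---is handled genuinely differently, and your way is the more careful one. The paper defines $\gamma' = \exp_{p'}\circ\psi\circ\exp_{\bar{p}}^{-1}\circ\hat{\gamma}$ and then asserts that the approximate isometry $\hat{\chi}'$ supplied by Corollary \ref{cor:normal} on the $(M',g'_{ab})$ side ``can be chosen to coincide with'' this exponential composite, so that the two Minkowski curves agree. That assertion is delicate: the pushforward of $\eta_{ab}$ under the exponential composite agrees with $g'_{ab}$ only \emph{at} $p'$, not along $\gamma'$ (normal coordinates centered at a point are not Fermi-type coordinates along a curve, and the inverse exponential map and the development of a curve agree only to second order in the parameter), so that map cannot in general serve as the Corollary-\ref{cor:normal} isometry, and the exact on-curve conditions $\chi\circ\gamma'=\gamma$ and $\chi^*(g_{ab})=g'_{ab}$ would be secured only approximately. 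Your development construction is precisely what closes this gap: because the flat connection of Theorem \ref{thm:normal} has the same parallel transport as $\nabla'$ along $\gamma'$, the flat chart $\psi'$ restricted to the curve \emph{is} the development map, so prescribing $\gamma'$ as the development of $c$ forces $\psi'\circ\gamma'=c$ exactly (after Poincar\'e alignment), from which the on-curve conditions follow. Your estimate is also organized more cleanly: since your $\chi$ is an exact isometry of the two flat metrics, both error terms are smooth tensors vanishing to first order along $\gamma[I]\cap O$ and are controlled by a single continuity-and-shrinking argument in the one fixed metric $h_{ab}$, whereas the paper composes two approximate isometries and must transport the reference metric to the $M'$ side, invoking naturality of $d_U$ under diffeomorphisms. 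Two small repairs to your write-up: ``$\gamma'$ is embedded because $c$ is'' should read ``$\gamma'$ is regular because development preserves the frame components of the velocity, and a regular curve is embedded on sufficiently short segments''; and the identification of $T_{p'}M'$ with $\Reals^4$ must be a linear \emph{isometry} (which your choice of a $g'$-orthonormal frame at $p'$ ensures), not merely a causal-character-preserving isomorphism.
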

\begin{proof}
Pick any point $\bar{p}$ of Minkowski spacetime.
By Corollary \ref{cor:normal},  there exist neighborhoods $\hat{O} \ni p$ and $\bar{O} \ni \bar{p}$, an embedded curve $\hat{\gamma}: \hat{I} \to \Reals^4$ with $\bar{p} \in \hat{\gamma}[\hat{I}]$, and an $(h_{|\hat{O}},1,\epsilon/2)$-isometry $\hat{\chi}: \bar{O} \to \hat{O}$ between $(\hat{O},g_{|\hat{O}})$ and $(\bar{O} ,\eta_{|\bar{O} })$ satisfying $\hat{\chi} \circ \hat{\gamma} = \gamma$ on $\hat{I}$ and $\hat{\chi}{}^*(g)=\eta$ on $\hat{\gamma}[\hat{I}]$.
Now, there is a linear isomorphism $\psi: T_{\bar{p}}\Reals^4 \to T_{p'}M'$ that preserves the classification of vectors into timelike, null, and spacelike.
In addition, the exponential map $\exp_{\bar{p}}: T_{\bar{p}}\Reals^4 \to \Reals^4$ is a diffeomorphism, and so is the exponential map $\exp_{p'}: T_{p'}M' \to M'$ onto its image.
Thus we may define, for a sufficiently small interval domain containing $\hat{\gamma}^{-1}(\hat{p})$, the curve $\gamma' = \exp_{p'} \circ \psi \circ \exp^{-1}_{\bar{p}} \circ \hat{\gamma}$, and by definition $p' \in \gamma'[I']$.

Next, note that for any sufficiently small compact neighborhood of $p$, $h'_{ab} = (\exp_{p'} \circ \psi \circ\exp^{-1}_{\bar{p}} \circ \hat{\chi}^{-1})^*(h_{ab})$ is a Riemannian metric on a compact neighborhood of $p'$.
So, by corollary \ref{cor:normal}, there exist neighborhoods $\hat{O}' \ni p$ and $\bar{O}' \ni \bar{p}$, an embedded curve $\hat{\gamma}': \hat{I}' \to \Reals^4$ with $\bar{p} \in \hat{\gamma}'[\hat{I}']$, and an $(h'_{ab},1,\epsilon/2)$-isometry $\hat{\chi}': \bar{O}' \to \hat{O}'$ between $(\hat{O}',g_{|\hat{O}'})$ and $(\bar{O}' ,\eta_{|\bar{O}' })$ satisfying $\hat{\chi}' \circ \hat{\gamma}' = \gamma'$ on $\hat{I}'$ and $\hat{\chi}'{}^*(g'_{ab})=\eta_{ab}$ on $\hat{\gamma}'[\hat{I}']$.
In particular, $\hat{\gamma}'$ and $\hat{\gamma}$ coincide where they are both defined since $\hat{\chi}'$ can be chosen to coincide with $\exp_{p'} \circ \psi \circ \exp^{-1}_{\bar{p}}$ where they are both defined.

Now define $I' = \hat{I} \cap \hat{I}'$, $O' = \hat{\chi}'[\bar{O} \cap \bar{O}']$, $\phi = \hat{\chi} \circ \hat{\chi}'{}^{-1}_{|O'}$, and $O = \chi[O']$.
On $I'$, $\chi \circ \gamma' = (\hat{\chi} \circ \hat{\chi}'{}^{-1}) \circ (\hat{\chi}' \circ \hat{\gamma}') = \hat{\chi} \circ \hat{\gamma} = \gamma$.
On $\gamma'[I']$, $\hat{\chi}{}^*(g_{ab}) = \hat{\chi}'{}^*(g'_{ab})$, so at the same points, $\chi^*(g_{ab})= (\hat{\chi} \circ \hat{\chi}'{}^{-1})^*(g_{ab}) =g'_{ab}$.
Moreover, $$\epsilon/2 > d_{\hat{O}}(g,\hat{\chi}_*(\eta);h,1) = d_{\bar{O}}(\hat{\chi}^*(g),\eta;\hat{\chi}^*(h),1) \geq d_{\bar{O}\cap \bar{O}'}(\hat{\chi}^*(g),\eta;\hat{\chi}^*(h),1)$$ and \begin{align*}
    \epsilon/2 &> d_{\hat{O}'}(g',\hat{\chi}'_*(\eta);h',1) = d_{\bar{O}'}(\hat{\chi}'^*(g'),\eta;\hat{\chi}'^*(h'),1)\\
    &\geq d_{\bar{O} \cap \bar{O}'}(\hat{\chi}'^*(g'),\eta;\hat{\chi}'^*(h'),1) = d_{\bar{O} \cap \bar{O}'}(\hat{\chi}'^*(g'),\eta;\hat{\chi}^*(h),1).
    \end{align*}
So by the triangle inequality, $\epsilon > d_{\bar{O}\cap \bar{O}'}(\hat{\chi}^*(g),\hat{\chi}'^*(g');\hat{\chi}^*(h),1) = d_O(g,\chi_*(g');h,1)$, i.e., $\chi$ is an $(h,1,\epsilon)$-isometry.
\end{proof}

If we call any spacetime fulfilling the role of Minkowski spacetime in Corollary \ref{cor:normal} a \textit{universal locally approximating spacetime}, then Theorem \ref{thm:universal} shows that \textit{every} spacetime is a universal locally approximating spacetime.
For example, one could equally well take (anti-)de Sitter spacetime or Schwarzschild spacetime to play this role.\footnote{
    See \citet{wise2010macdowell} for an application of this idea using Cartan geometry to describe MacDowell-Mansouri gravity.}
So, it may be misleading to assert that ``free-falling observers see no effect of gravity in their immediate vicinity'' \citep[p.~11]{poisson2004relativist}; one might just as well say ``free-falling observers see the local effects of a large cosmological constant'' or ``free-falling observers see the local effects of being inside a rotating black hole''.  This is because, along any curve and, approximately, in a neighborhood of any segment of the curve, the geometrical features of \textit{all} spacetimes are indistinguishable to first order. This seems to be a general feature of metric theories of gravity, for none of these results require Einstein's equation.

All this said, the fact that other spacetimes are universal locally approximating does not imply that Minkowski spacetime is \emph{not}---and so one might ask whether there are \textit{other} reasons to think that Minkowski spacetime has a distinguished role to play (beyond its pragmatic advantages already noted).  One possible answer would return to an issue we raised previously, in section \ref{sec:wrong}: in some discussions of local (approximate) flatness, authors present particular constructions of normal coordinates, or flat approximating metrics, motivated by physical considerations.  For instance, as we noted above, Fermi normal coordinates along a timelike geodesic may be thought of as the coordinates that a certain kind of idealized inertial observer might assign to spacetime---the vectors of the associated frame might represent something like an ideal clock and rigid measuring rods.  The fact that these coordinates may be interpreted as standard Minkowski coordinates adapted to a particular flat metric that approximates the spacetime metric along the observer's worldline might be taken to give Minkowski spacetime a special status as a universal approximating spacetime.  In other words, the argument would go, it is not just that spacetime is locally approximately flat; it is that certain observers, under certain idealized circumstances and using certain prescribed procedures, would naturally construct a particular approximating metric, which happens to be Minkowskian---and not, say, (anti-) de Sitter.  Authors who invoke local (approximate) flatness to explain the success of special relativity may well have something like this argument in mind.

Perhaps this is true---though we emphasize that it is not clear how this argument really yields a special sense in which spacetime is locally approximately flat.
Moreover, this interpretation of the Fermi normal coordinate construction is not conceptually innocent, as perfectly rigid objects exist only under very special circumstances in general relativity, circumstances not fulfilled in most cases of interest (e.g., involving acceleration or geodesic deviation).
But even if we set that issue aside, it remains the case that this sort of argument purchases a special status for Minkowski spacetime at the cost of \textit{assuming} a special status for a particular coordinate construction procedure.
There are two aspects to this assumption.
First is the restriction to timelike geodesics; once this is relaxed, the resulting coordinates may not be Lorentz coordinates, as the Rindler coordinates generated by Fermi transport of a frame for a uniformly accelerating observer in Minkowski spacetime attest.
The second aspect is that while the Fermi normal coordinate construction is mathematically convenient, there is nothing physically unique about it.
Other coordinate construction procedures can be specified to generate local coordinates natural to any other spacetime.
Which one chooses, if any, depends on the pragmatics of representing or predicting quantities of interest.

\section{Local Approximate Poincar\'e Symmetry}\label{sec:symmetry}

There is another variant on the claim that spacetime is locally approximately Minkowskian that one sometimes sees in the literature, according to which spacetime is said to exhibit, locally and approximately, the symmetries of Minkowski spacetime: that is, that spacetime is \emph{locally approximately Poincar\'e invariant}.  In this penultimate section, we turn to discuss this claim in light of Theorems \ref{thm:normal} and \ref{thm:universal}.

In fact, several different notions of ``local approximate Poincar\'e invariance'' are to be found in the literature; here, we focus on one recently introduced by \citet{Fletcher2020}.\footnote{\citet{ReadBrownLehmkuhl2018} have also introduced notions of ``local Poincar\'e invariance'', but \citet{Fletcher2020} argues that the definition they give of local Poincar\'e invariance of a spacetime is unsatisfactory. (See also \citet{WeatherallDogmas}.)  The alternative definition that Fletcher proposes is intended to address somewhat different issues from those that concern Read et al., and it is not clear that it can play the role in their arguments that their own definition does.  But a full assessment of that question would take us too far afield, and so we postpone any discussion of the relationship between their arguments and the results here to future work.}  Fletcher defines $(h,k,\epsilon)$-approximate isometries (or symmetries) as we do here; he then considers smooth vector fields $\xi^a$ near a point $p$ whose one parameter families of diffeomorphisms, for sufficiently small parameter values, generate $(h,k,\epsilon)$-approximate symmetries on sufficiently small neighborhoods of $p$.  On his definition, a spacetime $(M,g_{ab})$ has \emph{local approximate Poincar\'e symmetry} to order $k$, relative to a Riemannian metric $h_{ab}$, near a point $p$ if: there exist ten smooth vector fields on a neighborhood of $P$ whose Lie derivatives with respect to one another satisfy the Poincar\'e algebra commutation relations; and for any $\epsilon>0$, the one parameter families of diffeomorphisms generated by those vector fields generate $(h,k,\epsilon)$-approximate symmetries on sufficiently small neighborhoods of $p$.  He then shows that every spacetime has local approximate Poincar\'e symmetry, to any order and relative to any metric $h_{ab}$, near every point.

In fact, though Fletcher does not emphasize this, on his definitions the following holds: \emph{any} smooth vector field $\xi^a$ defined near any point $p$ in any relativistic spacetime generates $(h,k,\epsilon)$-approximate symmetries, for all $k$ and $h_{ab}$.  This follows simply from the smoothness of all of the structures under consideration.\footnote{Indeed, a tensor field is smooth at a point $p$ iff every smooth vector field generates $(h,k,\epsilon)$-approximate symmetries, for all $k$ and $h_{ab}$, near $p$.} So the key move in his argument that every spacetime has local approximate Poincar\'e symmetry is to show that one can always find smooth vector fields, near any point, that satisfy the commutation relations of the Poincar\'e algebra.  He does so by an argument invoking the exponential map.  But the results above offer a different perspective on how these smooth vector fields arise as local representations of the Poincar\'e algebra.  In particular, given any spacetime $(M,g_{ab})$ and point $p\in M$, let $\bar{g}_{ab}$ be a flat metric on a neighborhood $O$ of $p$ that approximates $g_{ab}$ to first order, in the sense of Theorem \ref{thm:normal}.  Then this metric will have (local) Killing vector fields defined on $O$, which, since the metric is flat, will satisfy the Poincar\'e commutation relations.  These local Killing vector fields will generate local approximate Poincar\'e symmetries of $(M,g_{ab})$ near $p$.  This alternative construction is helpful, because it clarifies, again, that although every spacetime admits local approximate Poincar\'e symmetries, it does not do so \textit{uniquely}.  If one chose a different flat approximating metric near $p$, the local representation of the Poincar\'e algebra resulting from the present construction would be different.

In fact, a similar moral holds if one adopts a slightly stronger notion of when a smooth vector field generates an ``approximate local symmetry'' than  Fletcher explicitly endorses.  Let us say that a smooth vector field $\xi^a$ on a relativistic spacetime $(M,g_{ab})$ generates an approximate local symmetry$^*$ near a point $p\in M$ if $\mathcal{L}_{\xi} g_{ab} = \mathbf{0}$ at $p$.  This definition captures the idea that not only is the difference between $g_{ab}$ and the flow of $g_{ab}$ along $\xi^a$ bounded in sufficiently small neighborhoods, but that the derivative of $g_{ab}$ along $\xi^a$ also vanishes at $p$.  Call such a vector field $\xi^a$ an \emph{approximate local Killing vector field}.\footnote{This definition of an approximate local Killing vector field is apparently the same as one implicitly adopted by \citet[pp.~18--19]{Sus}.}  With this definition in hand, we can say that a spacetime admits \emph{local approximate Poincar\'e symmetries$^*$} if, on some neighborhood $O\ni p$, there exist ten approximate local Killing vector fields that (exactly) satisfy the commutation relations of the Poincar\'e algebra.

The argument sketched above establishes that every spacetime admits \emph{local approximate Poincar\'e symmetries$^*$}.  This is because, given any point $p\in M$ and any flat metric $\bar{g}_{ab}$ approximating $g_{ab}$ in the sense of Theorem \ref{thm:normal} at $p$, if $\xi^a$ is a Killing vector field for $\bar{g}_{ab}$ at $p$, then
\[
\mathbf{0}=\mathcal{L}_{\xi}\bar{g}_{ab} = 2\bar{\nabla}_{(a}\xi_{b)} = 2\nabla_{(a}\xi_{b)}=\mathcal{L}_{\xi}g_{ab}
\]
at $p$, where $\bar{\nabla}$ and $\nabla$ are the Levi-Civita derivative operators associated with $\bar{g}_{ab}$ and $g_{ab}$, respectively, and we have made use of the fact that, at $p$, $\bar{\nabla}=\nabla$.  Thus the exact Poincar\'e symmetries of the approximating metric $\bar{g}_{ab}$ gives rise to  approximate Poincar\'e symmetries$^*$ of the original metric, $g_{ab}$. Once again, the approximate local Poincar\'e symmetries$^*$ of a generic metric $g_{ab}$ will fail to be unique, in the sense that they will be realized relative to different representations of the Poincar\'e algebra near $p$, corresponding to the different flat metrics that approximate $g_{ab}$ to first order near $p$.

One can push this line of thought even further by making use of Theorem \ref{thm:universal}.  In particular, we have just seen that there is a relationship between the local (exact) symmetries of a flat metric that approximates a given metric $g_{ab}$ near a point $p$ and the local approximate Poincar\'e symmetries of $g_{ab}$.  But in light of Theorem \ref{thm:universal}, identical arguments show that the symmetries of \emph{any} relativistic spacetime can be implemented as local approximate symmetries near any point of any spacetime at all.\footnote{It is perhaps worth noting that not all local approximate symmetries, or symmetries$^*$ arise in this way.  For instance, every relativistic spacetime \emph{also} has local approximate Euclidean symmetries$^*$, even though no Riemannian metric can approximate a Lorentzian one in the sense of Theorem \ref{thm:normal}.  To see this, note that in normal coordinates near any point of any spacetime $(M,g_{ab})$, one can always construct a flat Riemannian metric whose Levi-Civita derivative operator is the coordinate derivative operator, and thus the Killing vector fields of that flat Riemannian manifold will be local approximate Killing vector fields of the spacetime metric, $g_{ab}$.}  In other words, while every spacetime admits local approximate Poincar\'e symmetries near any point, so, too, does every spacetime admit local approximate (anti-)de Sitter symmetries, local approximate Schwarzschild symmetries, and local approximate Kerr symmetries.

Taken together, these observations clarify just how weak the property of ``approximate local invariance under some spacetime symmetry group'' is.  We take this point to amplify remarks made by \citet{Fletcher2020} in his original discussion of approximate local Poincar\'e symmetries, where he argues that the existence of such symmetries has no logical relationship to any ``local symmetries'' (or other substantive properties) of matter equations. We also wish to emphasize an important distinction that is especially important when reasoning about approximate local symmetries: ``approximate local invariance under symmetry group $G$'' is importantly different from ``approximate local invariance under \emph{only} symmetry group $G$ in a single, specific representation''.  Every spacetime has local approximate Poincar\'e invariance in the first sense, but not the second; this means that any argument that relies on local approximate Poincar\'e invariance as a premise will presumably still go through if one substituted any of the myriad other local approximate symmetries groups of spacetime.\footnote{Compare with \citet[\S 5.2]{Sus}.}

\section{Interlude}\label{sec:conclusion}

In this Part, we have considered several possible interpretations of the claim that relativistic spacetimes are ``locally approximately Minkowskian'' or ``locally approximately flat''.  We argued that two possible interpretations were simply false and that a third was too weak to have substance.  On the fourth interpretation we offered---the second of two coordinate chart interpretations---the claim is true, but its significance was difficult to fully assess.  We then stated and proved Theorem \ref{thm:normal}, which captures a precise sense in which every relativistic spacetime is locally approximately flat.  This final interpretation says that every spacetime is locally approximately flat in the sense that near any point of any spacetime (or near sufficiently small segments of a curve), there exists a flat metric that coincides with the spacetime metric to first order at that point (or on that curve) and approximates it arbitrarily well, relative to a particular family of norms, near that point (or curve).

This final interpretation is closely related to the second coordinate chart interpretation just mentioned.  But recasting things as we did, in terms of the existence of an approximating flat metric, allowed us to clarify certain features of local flatness that do not appear to have been widely recognized before.  In particular, we showed, in section \ref{sec:right}, that the approximating flat metric is not unique.  In other words, while it is true that every spacetime is locally approximately flat, it is not canonically so---which means that one cannot unambiguously invoke ``the'' approximate Minkowski structure associated with a point or curve.  Particular physical constructions or idealized observational contexts might suggest particular choices of approximating flat metric, but these are privileged only relative to those further choices.

We also showed that although there is a sense in which every spacetime is locally approximately Minkowskian, Minkowski spacetime is not the unique universal approximating spacetime, and that, in fact, every spacetime locally approximates every other spacetime.  We used this result to cast doubt on the idea that the local approximate Minkowski character of spacetime carried great foundational (as opposed to pragmatic) significance.  The upshot of all of this is that while one can isolate a precise and accurate statement to the effect that spacetime is locally approximately Minkowskian, this statement is misleadingly specific given that local approximation is pervasive. Perhaps a better way of characterizing the situation is that, to first order, \textit{all} spacetimes with the same metric signature have a universal character, in the sense that they all locally approximate one another.\footnote{Beyond general relativity, this seems to be a general feature of metric theories of gravity, relativistic or not: the possibilities they allow are not distinguishable by their purely local features.}  It is only at second order and higher that differences in structure between different spacetimes can be seen in arbitrarily small neighborhoods of a point or curve.  That spacetimes cannot approximate one another arbitrarily well to second order is, of course, closely related to the fact that curvature is a tensor.

Finally, we used these results to show that although every spacetime admits local approximate Poincar\'e symmetries near any point, in the sense introduced by \citet{Fletcher2020} and in another, slightly stronger sense that we introduced here, there are in general infinitely many ways in which they do so, and so, again, one cannot unambiguously speak of ``the'' local approximate Poincar\'e symmetries of a spacetime.  Indeed, not only does every spacetime admit local approximate Poincar\'e symmetries in many distinct senses (i.e., under different representations), every spacetime also locally approximately exhibits many other symmetries.

In the next Part, we will further develop and apply these ideas in the context of a related set of claims, to the effect that general relativity is ``locally special relativistic'' because matter in general relativity behaves, locally, as if it were in flat spacetime.

%An interesting consequence of this theorem concerns the reductive relationship between GR, SR, and other spacetime theories such as de Sitter relativity.
%In particular, we have:
%\begin{cor}
%(To be proven, perhaps modified.)
%For every pair of spacetimes, there is a suitable homothetic family of one that converges (in the $C^1$ compact-open topology) to the other.
%This result does not hold in the $C^2$ compact-open or any global or open topology.
%\end{cor}
%This captures one sense in which GR reduces not just to SR in the limit of small times and distances, but also to any Lorentzian geometry, such as de Sitter geometry.

\section*{Acknowledgments}
SCF acknowledges the support of a Marie Curie fellowship (FP7-MC-IIF-628533) during the early development of this project, and helpful feedback from audiences in T\"ubingen (3rd International Interdisciplinary Summer School), London (Sigma Club), Munich (MCMP), Vienna (Center for Quantum Science and Technology), Bucharest (Philosophy Dept.), Salzburg (Philosophy Dept.), and Dubrovnik (42nd Annual Philosophy of Science Conference) on an ancestral version entitled, ``On the Local Flatness of Spacetime.'' JOW: This material is partially based upon work produced for the project “New Directions in Philosophy of Cosmology”, funded by the John Templeton Foundation under grant number 61048, and partly upon work supported by the National Science Foundation under Grant No.~1331126.  We are grateful to David Malament and an anonymous reviewer for helpful comments on a previous draft, to to Thomas Barrett and JB Manchak for discussions of related material, and to Niels Linneman, James Read, and Nic Teh for drawing out attention to an important typo in a previous draft.

\singlespacing

\bibliographystyle{elsarticle-harv}
\bibliography{flatness}

\begin{thebibliography}{42}
\expandafter\ifx\csname natexlab\endcsname\relax\def\natexlab#1{#1}\fi
\providecommand{\url}[1]{\texttt{#1}}
\providecommand{\href}[2]{#2}
\providecommand{\path}[1]{#1}
\providecommand{\DOIprefix}{doi:}
\providecommand{\ArXivprefix}{arXiv:}
\providecommand{\URLprefix}{URL: }
\providecommand{\Pubmedprefix}{pmid:}
\providecommand{\doi}[1]{\href{http://dx.doi.org/#1}{\path{#1}}}
\providecommand{\Pubmed}[1]{\href{pmid:#1}{\path{#1}}}
\providecommand{\bibinfo}[2]{#2}
\ifx\xfnm\relax \def\xfnm[#1]{\unskip,\space#1}\fi
%Type = Incollection
\bibitem[{Bokulich and Bokulich(2020)}]{sep-bohr-correspondence}
\bibinfo{author}{Bokulich, A.}, \bibinfo{author}{Bokulich, P.},
  \bibinfo{year}{2020}.
\newblock \bibinfo{title}{{Bohr’s Correspondence Principle}}, in:
  \bibinfo{editor}{Zalta, E.N.} (Ed.), \bibinfo{booktitle}{The {Stanford}
  Encyclopedia of Philosophy}. \bibinfo{edition}{{F}all 2020} ed..
  \bibinfo{publisher}{Metaphysics Research Lab, Stanford University}.
%Type = Book
\bibitem[{Born(1962)}]{Born1962}
\bibinfo{author}{Born, M.}, \bibinfo{year}{1962}.
\newblock \bibinfo{title}{Einstein's Theory of Relativity}.
\newblock \bibinfo{edition}{Revised and enlarged} ed.,
  \bibinfo{publisher}{Dover}, \bibinfo{address}{New York}.
%Type = Article
\bibitem[{Brown(1997)}]{Brown1997}
\bibinfo{author}{Brown, H.R.}, \bibinfo{year}{1997}.
\newblock \bibinfo{title}{On the role of special relativity in general
  relativity}.
\newblock \bibinfo{journal}{International Studies in the Philosophy of Science}
  \bibinfo{volume}{11}, \bibinfo{pages}{67--81}.
%Type = Book
\bibitem[{Brown(2005)}]{Brown2005}
\bibinfo{author}{Brown, H.R.}, \bibinfo{year}{2005}.
\newblock \bibinfo{title}{Physical Relativity: Space-Time Structure From a
  Dynamical Perspective}.
\newblock \bibinfo{publisher}{Oxford University Press},
  \bibinfo{address}{Oxford}.
%Type = Article
\bibitem[{Brown(1962)}]{brown1962locally}
\bibinfo{author}{Brown, M.}, \bibinfo{year}{1962}.
\newblock \bibinfo{title}{Locally flat imbeddings of topological manifolds}.
\newblock \bibinfo{journal}{Annals of Mathematics} \bibinfo{volume}{75},
  \bibinfo{pages}{331--341}.
%Type = Article
\bibitem[{Butterfield(2011)}]{butterfield2011emergence}
\bibinfo{author}{Butterfield, J.}, \bibinfo{year}{2011}.
\newblock \bibinfo{title}{Emergence, reduction and supervenience: A varied
  landscape}.
\newblock \bibinfo{journal}{Foundations of Physics} \bibinfo{volume}{41},
  \bibinfo{pages}{920--959}.
%Type = Book
\bibitem[{Eddington(1924)}]{Eddington}
\bibinfo{author}{Eddington, A.S.}, \bibinfo{year}{1924}.
\newblock \bibinfo{title}{The Mathematical Theory of Relativity, 2nd ed.}
\newblock \bibinfo{publisher}{Cambridge University Press},
  \bibinfo{address}{Cambridge}.
%Type = Incollection
\bibitem[{Ehlers(1973)}]{ehlers1973survey}
\bibinfo{author}{Ehlers, J.}, \bibinfo{year}{1973}.
\newblock \bibinfo{title}{Survey of general relativity theory}, in:
  \bibinfo{editor}{Israel, W.} (Ed.), \bibinfo{booktitle}{Relativity,
  Astrophysics and Cosmology}. \bibinfo{publisher}{D.~Reidel},
  \bibinfo{address}{Dordrecht}, pp. \bibinfo{pages}{1--125}.
%Type = Book
\bibitem[{Einstein(2009)}]{einstein2009collected}
\bibinfo{author}{Einstein, A.}, \bibinfo{year}{2009}.
\newblock \bibinfo{title}{The Collected Papers of Albert Einstein, The Berlin
  Years: Correspondence, January--December 1921}. volume~\bibinfo{volume}{12}.
\newblock \bibinfo{publisher}{Princeton University Press},
  \bibinfo{address}{Princeton}.
%Type = Article
\bibitem[{Fletcher(2016)}]{fletcher2016similarity}
\bibinfo{author}{Fletcher, S.C.}, \bibinfo{year}{2016}.
\newblock \bibinfo{title}{Similarity, topology, and physical significance in
  relativity theory}.
\newblock \bibinfo{journal}{The British Journal for the Philosophy of Science}
  \bibinfo{volume}{67}, \bibinfo{pages}{365--389}.
%Type = Incollection
\bibitem[{Fletcher(2018)}]{fletcher2018}
\bibinfo{author}{Fletcher, S.C.}, \bibinfo{year}{2018}.
\newblock \bibinfo{title}{Similarity structure on scientific theories}, in:
  \bibinfo{editor}{Skowron, B.} (Ed.), \bibinfo{booktitle}{Topological
  Philosophy}. \bibinfo{publisher}{De Gruyter}. volume
  \bibinfo{volume}{forthcoming}.
%Type = Article
\bibitem[{Fletcher(2019)}]{fletcher2019reduction}
\bibinfo{author}{Fletcher, S.C.}, \bibinfo{year}{2019}.
\newblock \bibinfo{title}{On the reduction of general relativity to newtonian
  gravitation}.
\newblock \bibinfo{journal}{Studies in History and Philosophy of Modern
  Physics} \bibinfo{volume}{68}, \bibinfo{pages}{1--15}.
%Type = Incollection
\bibitem[{Fletcher(2020)}]{Fletcher2020}
\bibinfo{author}{Fletcher, S.C.}, \bibinfo{year}{2020}.
\newblock \bibinfo{title}{Approximate local {Poincar\'e} spacetime symmetry in
  general relativity}, in: \bibinfo{editor}{Beisbart, C.},
  \bibinfo{editor}{Sauer, T.}, \bibinfo{editor}{W{\"u}thrich, C.} (Eds.),
  \bibinfo{booktitle}{Thinking About Space and Time: 100 Years of Applying and
  Interpreting General Relativity}. \bibinfo{publisher}{Springer International
  Publishing}, \bibinfo{address}{Cham}, pp. \bibinfo{pages}{247--267}.
%Type = Article
\bibitem[{Fletcher(2021)}]{FletcherInvitation}
\bibinfo{author}{Fletcher, S.C.}, \bibinfo{year}{2021}.
\newblock \bibinfo{title}{An invitation to approximate symmetry, with three
  applications to intertheoretic relations}.
\newblock \bibinfo{journal}{Synthese} \bibinfo{volume}{198},
  \bibinfo{pages}{4811--4831}.
%Type = Book
\bibitem[{Friedman(1983)}]{Friedman1983}
\bibinfo{author}{Friedman, M.}, \bibinfo{year}{1983}.
\newblock \bibinfo{title}{Foundations of Space-Time Theories}.
\newblock \bibinfo{publisher}{Princeton University Press},
  \bibinfo{address}{Princeton}.
%Type = Article
\bibitem[{Geroch and Jang(1975)}]{Geroch+Jang}
\bibinfo{author}{Geroch, R.}, \bibinfo{author}{Jang, P.S.},
  \bibinfo{year}{1975}.
\newblock \bibinfo{title}{Motion of a body in general relativity}.
\newblock \bibinfo{journal}{Journal of Mathematical Physics}
  \bibinfo{volume}{16}, \bibinfo{pages}{65}.
%Type = Article
\bibitem[{Geroch and Weatherall(2018)}]{Geroch+Weatherall}
\bibinfo{author}{Geroch, R.}, \bibinfo{author}{Weatherall, J.O.},
  \bibinfo{year}{2018}.
\newblock \bibinfo{title}{The motion of small bodies in space-time}.
\newblock \bibinfo{journal}{Communications in Mathematical Physics}
  \bibinfo{volume}{364}, \bibinfo{pages}{607–634}.
%Type = Book
\bibitem[{Iliev(2006)}]{iliev2006handbook}
\bibinfo{author}{Iliev, B.Z.}, \bibinfo{year}{2006}.
\newblock \bibinfo{title}{Handbook of Normal Frames and Coordinates}.
\newblock \bibinfo{publisher}{Birkh\"auser}, \bibinfo{address}{Basel}.
%Type = Article
\bibitem[{Janssen(2012)}]{JanssenRelative}
\bibinfo{author}{Janssen, M.}, \bibinfo{year}{2012}.
\newblock \bibinfo{title}{The twins and the bucket: How {Einstein} made gravity
  rather than motion relative in general relativity}.
\newblock \bibinfo{journal}{Studies in History and Philosophy of Modern
  Physics} \bibinfo{volume}{43}, \bibinfo{pages}{159--175}.
%Type = Article
\bibitem[{Knox(2013)}]{Knox2013-KNOESG}
\bibinfo{author}{Knox, E.}, \bibinfo{year}{2013}.
\newblock \bibinfo{title}{Effective spacetime geometry}.
\newblock \bibinfo{journal}{Studies in History and Philosophy of Modern
  Physics} \bibinfo{volume}{44}, \bibinfo{pages}{346--356}.
%Type = Incollection
\bibitem[{Lehmkuhl(2021)}]{LehmkuhlEP}
\bibinfo{author}{Lehmkuhl, D.}, \bibinfo{year}{2021}.
\newblock \bibinfo{title}{The equivalence principle(s)}, in:
  \bibinfo{editor}{Knox, E.}, \bibinfo{editor}{Wilson, A.} (Eds.),
  \bibinfo{booktitle}{The Routledge Companion to Philosophy of Physics}.
  \bibinfo{publisher}{Routledge}, \bibinfo{address}{New York}, pp.
  \bibinfo{pages}{125--144}.
%Type = Book
\bibitem[{Malament(2012)}]{malament2012}
\bibinfo{author}{Malament, D.B.}, \bibinfo{year}{2012}.
\newblock \bibinfo{title}{Topics in Foundations of General Relativity and
  Newtonian Gravitation Theory}.
\newblock \bibinfo{publisher}{University of Chicago Press},
  \bibinfo{address}{Chicago}.
%Type = Article
\bibitem[{Manchak(2011)}]{Manchak}
\bibinfo{author}{Manchak, J.B.}, \bibinfo{year}{2011}.
\newblock \bibinfo{title}{What is a physically reasonable space-time?}
\newblock \bibinfo{journal}{Philosophy of Science} \bibinfo{volume}{78},
  \bibinfo{pages}{410--420}.
%Type = Book
\bibitem[{Misner et~al.(1973)Misner, Thorne and Wheeler}]{MTW1973}
\bibinfo{author}{Misner, C.W.}, \bibinfo{author}{Thorne, K.S.},
  \bibinfo{author}{Wheeler, J.A.}, \bibinfo{year}{1973}.
\newblock \bibinfo{title}{Gravitation}.
\newblock \bibinfo{publisher}{W.~H.~Freeman}, \bibinfo{address}{New York}.
%Type = Article
\bibitem[{Nickles(1973)}]{nickles1973two}
\bibinfo{author}{Nickles, T.}, \bibinfo{year}{1973}.
\newblock \bibinfo{title}{Two concepts of intertheoretic reduction}.
\newblock \bibinfo{journal}{The Journal of Philosophy} \bibinfo{volume}{70},
  \bibinfo{pages}{181--201}.
%Type = Article
\bibitem[{Norton(1985)}]{NortonEP}
\bibinfo{author}{Norton, J.}, \bibinfo{year}{1985}.
\newblock \bibinfo{title}{What was {Einstein's} principle of equivalence?}
\newblock \bibinfo{journal}{Studies in history and philosophy of science}
  \bibinfo{volume}{16}, \bibinfo{pages}{203--246}.
%Type = Article
\bibitem[{Norton(1993)}]{NortonGC}
\bibinfo{author}{Norton, J.D.}, \bibinfo{year}{1993}.
\newblock \bibinfo{title}{General covariance and the foundations of general
  relativity: eight decades of dispute}.
\newblock \bibinfo{journal}{Reports on Progress in Physics}
  \bibinfo{volume}{56}, \bibinfo{pages}{791--858}.
%Type = Article
\bibitem[{{\'O}~Raifeartaigh(1957)}]{oRaifeartaigh1957fermi}
\bibinfo{author}{{\'O}~Raifeartaigh, L.}, \bibinfo{year}{1957}.
\newblock \bibinfo{title}{Fermi coordinates}.
\newblock \bibinfo{journal}{Proceedings of the Royal Irish Academy, Section A:
  Mathematical and Physical Sciences} \bibinfo{volume}{59},
  \bibinfo{pages}{15--24}.
%Type = Incollection
\bibitem[{Pauli(1921)}]{Pauli}
\bibinfo{author}{Pauli, W.}, \bibinfo{year}{1921}.
\newblock \bibinfo{title}{Relativit\"atstheorie}, in:
  \bibinfo{editor}{Sommerfeld, A.} (Ed.), \bibinfo{booktitle}{Encyclop\"adie
  der mathematischen Wissenschaften}. \bibinfo{publisher}{B.G.Teubner},
  \bibinfo{address}{Leibzig}. volume \bibinfo{volume}{5.2}, pp.
  \bibinfo{pages}{539--775}.
%Type = Book
\bibitem[{Poisson(2004)}]{poisson2004relativist}
\bibinfo{author}{Poisson, E.}, \bibinfo{year}{2004}.
\newblock \bibinfo{title}{A relativist's toolkit: The mathematics of black-hole
  mechanics}.
\newblock \bibinfo{publisher}{Cambridge University Press},
  \bibinfo{address}{Camrbidge}.
%Type = Article
\bibitem[{Poisson et~al.(2011)Poisson, Pound and Vega}]{poisson2011motion}
\bibinfo{author}{Poisson, E.}, \bibinfo{author}{Pound, A.},
  \bibinfo{author}{Vega, I.}, \bibinfo{year}{2011}.
\newblock \bibinfo{title}{The motion of point particles in curved spacetime}.
\newblock \bibinfo{journal}{Living Reviews in Relativity} \bibinfo{volume}{14},
  \bibinfo{pages}{7}.
%Type = Incollection
\bibitem[{Pooley(2010)}]{PooleyGC}
\bibinfo{author}{Pooley, O.}, \bibinfo{year}{2010}.
\newblock \bibinfo{title}{Substantive general covariance: Another decade of
  dispute}, in: \bibinfo{editor}{Suárez, M.}, \bibinfo{editor}{Dorato, M.},
  \bibinfo{editor}{Rédei, M.} (Eds.), \bibinfo{booktitle}{EPSA Philosophical
  Issues in the Sciences}. \bibinfo{publisher}{Springer},
  \bibinfo{address}{Dordrecht}, pp. \bibinfo{pages}{197--209}.
%Type = Article
\bibitem[{Read et~al.(2018)Read, Brown and Lehmkuhl}]{ReadBrownLehmkuhl2018}
\bibinfo{author}{Read, J.}, \bibinfo{author}{Brown, H.R.},
  \bibinfo{author}{Lehmkuhl, D.}, \bibinfo{year}{2018}.
\newblock \bibinfo{title}{Two miracles of general relativity}.
\newblock \bibinfo{journal}{Studies in History and Philosophy of Modern
  Physics} \bibinfo{volume}{64}, \bibinfo{pages}{14--25}.
%Type = Book
\bibitem[{Reichenbach(1958)}]{Reichenbach1958}
\bibinfo{author}{Reichenbach, H.}, \bibinfo{year}{1958}.
\newblock \bibinfo{title}{The Philosophy of Space and Time}.
\newblock \bibinfo{publisher}{Dover}, \bibinfo{address}{New York}.
%Type = Incollection
\bibitem[{Schild(1967)}]{schild1967lectures}
\bibinfo{author}{Schild, A.}, \bibinfo{year}{1967}.
\newblock \bibinfo{title}{Lectures on general relativity theory}, in:
  \bibinfo{editor}{Ehlers, J.} (Ed.), \bibinfo{booktitle}{Relativity Theory and
  Astrophysics}. \bibinfo{publisher}{American Mathematical Society},
  \bibinfo{address}{Providence}. volume~\bibinfo{volume}{1}, pp.
  \bibinfo{pages}{1--104}.
%Type = Article
\bibitem[{Sus(2020)}]{Sus}
\bibinfo{author}{Sus, A.}, \bibinfo{year}{2020}.
\newblock \bibinfo{title}{Relativity without miracles}.
\newblock \bibinfo{journal}{European Journal for Philosophy of Science}
  \bibinfo{volume}{11}, \bibinfo{pages}{1--33}.
%Type = Book
\bibitem[{Synge(1960)}]{Synge}
\bibinfo{author}{Synge, J.L.}, \bibinfo{year}{1960}.
\newblock \bibinfo{title}{Relativity: The General Theory}.
\newblock \bibinfo{publisher}{North-Holland}, \bibinfo{address}{Amsterdam}.
%Type = Book
\bibitem[{Torretti(1996)}]{torretti1996relativity}
\bibinfo{author}{Torretti, R.}, \bibinfo{year}{1996}.
\newblock \bibinfo{title}{Relativity and Geometry}.
\newblock \bibinfo{edition}{2nd} ed., \bibinfo{publisher}{Dover},
  \bibinfo{address}{New York}.
%Type = Book
\bibitem[{Wald(1984)}]{Wald}
\bibinfo{author}{Wald, R.M.}, \bibinfo{year}{1984}.
\newblock \bibinfo{title}{General Relativity}.
\newblock \bibinfo{publisher}{University of Chicago Press},
  \bibinfo{address}{Chicago}.
%Type = Article
\bibitem[{Wallace(2019)}]{Wallace}
\bibinfo{author}{Wallace, D.}, \bibinfo{year}{2019}.
\newblock \bibinfo{title}{Who's afraid of coordinate systems? {An} essay on
  representation of spacetime structure}.
\newblock \bibinfo{journal}{Studies in History and Philosophy of Modern
  Physics} \bibinfo{volume}{67}, \bibinfo{pages}{125--136}.
%Type = Article
\bibitem[{Weatherall(2021)}]{WeatherallDogmas}
\bibinfo{author}{Weatherall, J.O.}, \bibinfo{year}{2021}.
\newblock \bibinfo{title}{Two dogmas of dynamicism}.
\newblock \bibinfo{journal}{Synthese} \bibinfo{volume}{199},
  \bibinfo{pages}{253--275}.
%Type = Article
\bibitem[{Wise(2010)}]{wise2010macdowell}
\bibinfo{author}{Wise, D.K.}, \bibinfo{year}{2010}.
\newblock \bibinfo{title}{{MacDowell-Mansouri} gravity and {Cartan} geometry}.
\newblock \bibinfo{journal}{Classical and Quantum Gravity}
  \bibinfo{volume}{27}, \bibinfo{pages}{155010}.

\end{thebibliography}

\end{document}